\numberwithin{equation}{section}
\newtheorem{Thm}{Theorem}[section]
\newtheorem{Def}[Thm]{Definition}
\newtheorem{Lm}[Thm]{Lemma}
\newtheorem{Prop}[Thm]{Proposition}
\def\bmt{\left[\begin{array}}
\def\emt{\end{array}\right]}
\def\Z{\mathbb Z}
\def\T{\mathbb T}
\def\R{\mathbb R}
\def\eps{\varepsilon}
\def\al{\alpha}
\def\dt{\delta}
\def\lb{\lambda}
\def\cM{\mathcal M}
\title[KAM and blackholes]{KAM and geodesic dynamics of blackholes}
\author{Jinxin Xue}
\date\today
\begin{document}
\address{Jingzhai 310, Department of mathematics and Yau Mathematical Sciences Center, Tsinghua University, Beijing,100084 }
\email{jxue@tsinghua.edu.cn}%
\maketitle
\begin{abstract}

In this paper we apply KAM theory and the Aubry-Mather theory for twist maps to the study of bound geodesic dynamics of a perturbed blackhole background. The general theories apply mainly to two observable phenomena: the photon shell (unstable bound spherical orbits) and the quasi-periodic oscillations. We discover there is a gap structure in the photon shell that can be used to reveal information of the perturbation.


\end{abstract}

\tableofcontents
\renewcommand\contentsname{Index}

\section{Introduction}
This paper is a companion paper of \cite{X}. The main purpose is to study the stability of the geodesic dynamics of a blackhole background under perturbations. 

The important metrics such as Schwarzschild and Kerr are both integrable, each with four independent conserved quantities. If we consider the bounded motions in an integrable system, it is known from Liouville-Arnold theorem that  the dynamics can be treated as toral linear flows.    When an integrable system is perturbed slightly, the integrability is generically broken and chaotic motions occur. The remarkable discovery of the Kolmogorov-Arnold-Moser theorem shows that most motions are still toral linear flows after a coordinate change. In reality, the blackholes modeled on either Schwarzschild and Kerr always undergo some perturbations. We try to understand the dynamics of a particle, massless or mass, moving in a perturbed Schwarzschild or Kerr background, using the theory of Hamiltonian dynamical systems developed for nearly integrable systems.

  The mathematical model that we use to study the geodesic dynamics of the blackholes is the following map called integrable twist map:
   \begin{equation}\label{EqTwist0}\phi_0: \ \T^1\times [0,1]\to \T^1\times [0,1],\quad (\theta, I)\mapsto (\theta+\nu(I), I)\end{equation}
 where $\nu$ is assumed to be smooth and strictly monotone in $I$. The dynamics of this example is as follows.  Each $I$-circle is invariant and the dynamics on it is a rotation by $\nu(I)$. We call this rotation rate $\nu(I)$ the rotation number. When $\nu(I)$ is rational, then each orbit on the $I$-circle is periodic, and when $\nu(I)$ is irrational, then each orbit is dense on the $I$-circle. When $\phi_0$ is slightly perturbed within the class of smooth symplectic maps, we have the following picture for the dynamics. First,  Moser's version of KAM theory implies that each $I$-circle with $\nu(I)$ a Diophantine number is perturbed to a nearby smooth loop, called invariant circle, that remains invariant under the perturbed map and the dynamics on it can be conjugate to the original unperturbed rotation, provided the perturbation is sufficiently small in a smooth enough topology. For a fixed small perturbation of size $\eps$, the set of remaining invariant circles has rotation numbers lying in a Cantor set (a nowhere dense closed set) of measure $1-O(\sqrt\eps)$. Between two nearby invariant circles (not necessarily KAM curve), there is a gap region called Birkhoff instability region where the dynamics is very chaotic. There is a Aubry-Mather theory developed for general twist maps which gives the existence of some special orbits in the Birkhoff instability region, such as periodic orbits, heteroclinic orbits corresponding to rational rotation numbers and Cantor set like orbit corresponding to irrational rotation numbers. So KAM theory allows us to find a large measure set of the phase space where the motion is regular (quasiperiodic), while Aubry-Mather theory allows use to find some special orbits in the gaps.  We give an outline of the two theories in Section \ref{SSTwist} and more details in Appendix \ref{AppAM}.

We first locate the part of phase space with bounded motions. For massless particles moving on null geodesics in Schwarzschild or Kerr background, these bounded motions are called bound photon orbits, or fundamental photon orbits in literature. Each such orbit is moving on a sphere with a fixed radius and is unstable under radial perturbations. This is an observable feature of the blackhole, which lies on the edges of the blackhole shadows  \cite{M87,N}. 
 If we also consider the motion of massive particles on timelike geodesics in Schwarzschild or Kerr background, again we have unstable bound spherical orbits, similar to the lightlike case.
 For simplicity, we use the term {\it photon shell} to call the set of bound spherical orbits that are unstable under radial perturbations, for both Schwarzschild and Kerr, and both lightlike or timelike geodesics. The photon shell when considered in the phase space corresponds to a geometric object called normally hyperbolic invariant manifold in dynamical systems, which is known to be robust under perturbations. This means that an orbit on the photon shell remains a bound orbit in a perturbed Schwarzschild or Kerr metric, though it may become chaotic. When a background is a perturbation of Schwarzschild or Kerr, we still use the terminology {\it photon shell} to call the union of bound orbits that are unstable under radial perturbations. Note that in this case, the radial component of each orbit may oscillate slightly.

When studying Arnold diffusion, people often need some information on the normally hyperbolic invariant manifold, which in our case is the photon shell (c.f. \cite{X}).  We show in this paper that both KAM and the twist map theory apply to the study of the dynamics on the photon shell in a perturbed Kerr spacetime, but neither applies to perturbed Schwarzschild a priori without further information of the perturbation. In particular, KAM implies in the Kerr case the existence of a gap structure (many Birkhoff instability regions) on the photon shell exhibiting coexistence of regular and chaotic motions, which we believe is observable and can reveal information of the perturbation.


Furthermore, in the massive case, in addition to the photon shell, there are a lot more bound timelike geodesics. It can be observed that outside a blackhole that is a slowly rotating accretion disk emitting X-rays exhibiting some quasi-periodic oscillations (QPO) in its frequency (c.f. \cite{DKN, IM, JDW} etc). In the phase space this is a region around a stable circular orbit. We show both KAM and twist map theory apply in this region for both Schwarzchild and Kerr cases to yield lots of quasiperiodic orbits as well as other special orbits.

Both Schwarzschild and Kerr metrics are written in coordinates $(\tau,r,\theta,\varphi)$ where $\tau$ is the coordinate time, $r$ is the polar radius, $\theta$ is the latitudinal angle and $\varphi$ is the azimuthal angle. We will use the following definition throughout the paper.
\begin{Def}
A metric is called stationary if it is independent of the coordinate time $\tau$ and axisymmetric if it is independent of the azimuth angle $\varphi$.
\end{Def}

\subsection{The photon shell dynamics}

We consider perturbations stationary perturbation so that the particle's energy $E$, that is dual to $\tau,$ remains to be a constant of motion, so we fix a value $E$ and restrict to the photon shell. The resulting system has two degrees of freedom with coordinates $(\theta,p_\theta, \varphi, L_z)$ and we only need to study the latitudinal and azimuthal motions, where $p_\theta$ and $L_z=p_\varphi$ are generalized momentum variables due to $\theta$ and $\varphi$ respectively, and $L_z$ in particular is a constant of motion having the physical meaning of the $z$-component of the angular momentum. This subsystem describes a particle's motion on the photon shell that has a vertical oscillation within a spherical band symmetric around the equator when orbiting around the center of the blackhole. 

Our general strategy is to introduce action-angle coordinates $(\al_\theta,J_\theta,\al_\varphi, L_z)\in \T\times \R_+\times \T\times \R$ for the $\theta$- and $\varphi$-motions. In these coordinates the unperturbed Schwarzschild or Kerr Hamiltonian has the form $H(J_\theta, L_z)$, so the dynamics leaves $J_\theta,L_z$ invariant and the angles $(\al_\theta,\al_\varphi)$ move on the torus $\T^2$ linearly as $(\al_\theta,\al_\varphi)\mapsto (\al_\theta,\al_\varphi)+t (\partial_{J_\theta}H, \partial_{L_z}H)\ \mathrm{mod}\ \Z^2,\ t\in \R$. For this system, we further fix a Hamiltonian level set $H=-\frac12\mu^2$ with $\mu^2=0$ in the massless case and $\mu^2=1$ in the massive case, and look at the return map to the section $\{\al_\theta=0\}$ (or to say $\{\theta=\pi/2\}$). Then the map has the form of \eqref{EqTwist0} with $\nu:=\partial_{L_z}H/\partial_{J_\theta}H$ parametrized by $L_z$. To apply KAM theory and twist map theory, we have to find a parameter range where the ratio has no critical point. The main difficulty is that the Hamiltonian cannot be written explicitly as the action variables $(J_\theta,L_z)$.

The Schwarzschild case is relatively simple, since each photon spherical orbit moves on a plane passing through the origin so is periodic and all the photon spherical orbits have the same period. In this case the ratio $\nu$ is constantly 1, so neither KAM nor twist map theory applies without further knowledge of the perturbations.

It turns out that both theories apply to Kerr. For Kerr, we can reparametrize the ratio $\nu$ by the radius $r$ of the photon shell orbits and denote by $\mathcal R$ the domain of definition for $r$ (c.f. Definition \ref{DefAdmE}). So the photon shell of the unperturbed Kerr is foliated by $S_r$, that is a spherical strip symmetric around the equator consisting of all spherical orbits of radius $r\in \mathcal R$. 

\begin{Thm}\label{ThmPhotonKerr0}
Consider null geodesics in a  Kerr spacetime with mass $M$ and angular momentum $a$. Then there exists an open set of parameter $a/M$ such that the frequency ratio $\nu=\omega_\varphi/\omega_\theta$ as a function of $r\in \mathcal R$ is smooth, strictly monotone and has a discontinuity point at some $r_\star\in \mathcal R$ with $\nu(r_{\star-})-\nu(r_{\star+})=2$. Orbits on $S_r$ with $r<r_\star$ are all prograde and that with $r>r_\star$ are retrograde. Denote by $U$ a small neighborhood of $r_\star$.

Let $ \eps h_{\mu\nu}dx^\mu dx^\nu$ be a stationary perturbation of the above Kerr metric. Then if $\eps$ sufficiently small, we have the following for the dynamics of null geodesics in the perturbed system
\begin{enumerate}
\item There is a closed subset $\mathcal C$ of $\mathcal R\setminus U$ such that for each $r\in \mathcal C$, in the perturbed system, there is a set $\tilde S_r$ that is a small deformation of $S_r$, such that $\tilde S_r$ consists of periodic or quasiperiodic orbits.
\item Each interval in $\mathcal R\setminus (U\cup\mathcal C)$ corresponds to a Birkhoff instability region where the spheres $S_r$ are broken (c.f. Definition \ref{DefBroken}) and twist map theory $($c.f. Theorem \ref{ThmAM} $(2.b), (3.b)$ and Theorem \ref{ThmMather}$)$ applies.
\end{enumerate}
\end{Thm}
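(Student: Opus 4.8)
\emph{Strategy.} I would prove the unperturbed assertion first --- the structure of the frequency ratio $\nu(r)$ on the Kerr photon shell --- and then feed the resulting integrable monotone twist map into the standard NHIM-persistence, KAM and Aubry--Mather machinery to obtain the perturbative conclusions (1) and (2).

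\emph{The unperturbed photon shell.} Starting from the separated null-geodesic equations of Kerr in Boyer--Lindquist coordinates with Carter constant $Q$, energy $E$ and axial momentum $L_z$, the photon shell is cut out by $R(r)=R'(r)=0$ for the radial potential $R$; solving these determines $L_z=L_z(r)$ and $Q=Q(r)$ along $\mathcal R$ (after scaling out $E$), and there is a unique $r_\star$ with $L_z(r_\star)=0$, separating $L_z>0$ (prograde) from $L_z<0$ (retrograde), which gives the prograde/retrograde dichotomy. Passing to Mino time $\lb$ decouples the motions, $d\theta/d\lb=\pm\sqrt{\Theta(\theta)}$ and $d\varphi/d\lb=\Phi_r-aE+L_z/\sin^2\theta$ with $\Phi_r$ constant along a given spherical orbit, so the $\theta$-motion on $S_r$ is a one-degree-of-freedom libration between turning points $\theta_\pm(r)\in(0,\pi)$ with period $T_\theta(r)=\oint d\theta/\sqrt{\Theta}$ and
\begin{equation*}
\nu(r)=\frac{\omega_\varphi}{\omega_\theta}=\frac{\Delta\varphi}{2\pi}=\frac{(\Phi_r-aE)\,T_\theta(r)}{2\pi}+\frac1\pi\int_{\theta_-(r)}^{\theta_+(r)}\frac{L_z(r)\,d\theta}{\sin^2\theta\,\sqrt{\Theta(\theta)}}.
\end{equation*}
The first summand is real-analytic in $r$ (complete elliptic integrals, with $\omega_\theta=2\pi/T_\theta$ never zero) and extends continuously across $r_\star$. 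For the jump I analyze the second summand as $L_z\to0$: since $\theta_-\to0$ proportionally to $|L_z|$, the substitution $\theta=\theta_-\sec\psi$ turns the integrand near each turning point into $(\mathrm{sgn}\,L_z)\,d\psi$, contributing $(\mathrm{sgn}\,L_z)\,\pi/2$ in the limit at $\theta_-\to 0$ and again at $\theta_+\to\pi$, while the bulk of the integral is $o(1)$; hence the second summand tends to $+1$ as $r\to r_\star^-$ and to $-1$ as $r\to r_\star^+$, so $\nu(r_{\star-})-\nu(r_{\star+})=2$.

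\emph{Monotonicity --- the main obstacle.} Strict monotonicity of $\nu$ on each of the two pieces of $\mathcal R\setminus\{r_\star\}$ I would obtain by differentiating the displayed formula and showing $\nu'(r)\ne0$, which reduces to an inequality between elliptic-type integrals. I expect this to be the hardest point. Since at $a=0$ one recovers the Schwarzschild degeneracy $\nu\equiv1$, monotonicity cannot hold for all $0<a/M<1$; instead I would establish the inequality in an explicit open window of $a/M$ --- e.g. a one-sided neighborhood of the extremal value $a/M=1$ --- using the asymptotics of the complete elliptic integrals there and their known monotonicity in the modulus, the window being open because strict inequalities persist on compact $r$-subsets. (One could also import the corresponding computation from the companion paper \cite{X}.)

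\emph{Perturbation: NHIM reduction, then KAM and Aubry--Mather.} Each spherical photon orbit is unstable exactly in the radial $(r,p_r)$ directions with uniformly positive rate, while the tangential motion on the $S_r$ is quasi-periodic; hence $\Lb_0=\bigcup_{r\in\mathcal R}S_r$ at energy $E$ is a normally hyperbolic invariant manifold with unbounded spectral-gap ratio. A stationary smooth perturbation $\eps h_{\mu\nu}\,dx^\mu dx^\nu$ keeps $E$ conserved and changes the geodesic vector field by $O(\eps)$ in every $C^k$ norm on the relevant compact part of phase space, so by Fenichel--Hirsch--Pugh--Shub theory $\Lb_0$ persists as a nearby invariant manifold $\tilde\Lb$ (the perturbed photon shell), as smooth as the perturbed flow; $\tilde\Lb$ stays symplectic and the restricted flow is Hamiltonian, an $O(\eps)$-small smooth perturbation of the completely integrable flow on $\Lb_0$ with actions $(J_\theta,L_z)$. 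Its Poincar\'e return map to $\{\al_\theta=0\}$ is therefore an exact-symplectic map $O(\eps)$-close, in a smooth topology, to the integrable twist map \eqref{EqTwist0} with frequency $\nu(r)$, and on $\mathcal R\setminus U$ (with $U$ a small neighborhood of $r_\star$) this unperturbed map is a genuine smooth monotone twist map on each of the two subintervals, so the discontinuity at $r_\star$ is harmlessly excised. Applying Moser's KAM on each subinterval, every Diophantine value of $\nu$ gives a persistent smooth invariant circle with dynamics conjugate to the unperturbed rotation; pulled back through the monotone correspondence $r\leftrightarrow\nu(r)$ these are indexed by a closed, nowhere dense set $\mathcal C\subset\mathcal R\setminus U$ of relative measure $1-O(\sqrt\eps)$, and suspending each circle in the flow on $\tilde\Lb$ yields the two-torus $\tilde S_r$, a small deformation of $S_r$ carrying quasi-periodic (or, in the non-generic case that a resonant circle survives, periodic) orbits --- this is (1). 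The complement $\mathcal R\setminus(U\cup\mathcal C)$ is a countable union of open intervals, each sent by $\nu$ into a gap between two surviving KAM rotation numbers; for $r$ there the circle $S_r$ is broken in the sense of Definition \ref{DefBroken}, the region between the bounding KAM curves is a Birkhoff instability region, and Theorem \ref{ThmAM}$(2.b),(3.b)$ together with Theorem \ref{ThmMather} provide the Birkhoff periodic orbits with their heteroclinics for rational rotation numbers in the gap and the Aubry--Mather Cantor sets for irrational ones --- this is (2). Besides the elliptic-integral monotonicity estimate, the only point requiring care is the verification that the NHIM reduction genuinely delivers an exact-symplectic, sufficiently smooth nearly-integrable twist map; the subsequent KAM and Aubry--Mather steps are then standard.
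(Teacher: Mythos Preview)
Your proposal follows the same architecture as the paper: restrict to the photon shell as an NHIM, pass to action--angle coordinates, compute the frequency ratio $\nu(r)$, analyze the jump at $L_z=0$, and then invoke NHIM persistence, Moser's KAM, and Aubry--Mather theory for the perturbative conclusions. The discontinuity analysis you sketch (the $\mathrm{sgn}(L_z)\cdot\pi/2$ contribution from each pole, counted four times per $\theta$-period, giving a jump of $2$ after the $2\pi$ normalization) is exactly what the paper does in Section~\ref{SSSDisc}.

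The one substantive difference is how monotonicity is handled. You correctly flag this as the main obstacle and propose an analytic attack via elliptic-integral asymptotics in an open window of $a/M$ (suggesting a neighborhood of the extremal value). The paper does something much cheaper: it derives an explicit integral formula for $\nu(r)$ --- via the Jacobian of the map $(J_\theta,L_z)\mapsto(H,Q)$ rather than your direct $\Delta\varphi/2\pi$ route, though the two are equivalent --- and then simply \emph{plots} it at the specific value $a/M=0.8$ (Figure~\ref{FigPhotonK0}), reads off strict monotonicity from the graph, and invokes smoothness of the formula in $a/M$ to pass to an open set of parameters. So the paper's ``proof'' of the monotonicity clause is a numerical verification at one parameter value, not an analytic inequality. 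Your proposed route is more rigorous and more work; the paper's route is faster but relies on trusting the plotted curve. Either way, the downstream KAM and Aubry--Mather steps are identical.
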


 \begin{figure}
\centering
\includegraphics[width=0.4\textwidth]{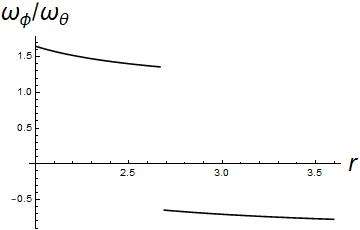}
\caption{The frequency ratio $\omega_\varphi/\omega_\theta$ with parameters $a/M=0.8, \mu^2=0$}
\label{FigPhotonK0}
\end{figure}

In particular, the gaps (instability regions) often occur around rational frequency ratios of the form $p/q$, due to the existence of Fourier modes $\exp(\sqrt{-1}k(p\al_\theta+q\al_\varphi)),$ $k,p,q\in \Z,$ in the perturbation. In general, the gap is large when the denominator of the rational number is small. We expect that this is an observable phenomenon, in some sense an analogue of the Kirkwood gap in the asteroid belt in the solar system, and can be used to conclude the information of the perturbation.

In Figure \ref{FigPhotonK0}, we see that there is a discontinuity in the frequency ratio $\nu$, which occurs at a point where $L_z$ changes sign and the motion of the particle becomes from prograde to retrograde. This phenomenon was observed in \cite{Wi,T}. We give a elementary mathematical explanation in Section \ref{SSSDisc}. In the proof, we deal with a particularly chosen parameter value $a/M=0.8$, but it is clear that our formulas are applicable to any other parameter values of interest. 

For the geodesic motion of massive particles, we have a similar theorem (c.f. Theorem \ref{ThmPhotonKerr}). The main difference of the massive case and the massless case is that the dynamics on the photon shell of the massive case depends on the particle's $E$, while the massless case does not. So in the massive case, we need to fix an $E$ ahead of time, then there is the similar gap structure for the photon orbits of all massive particles with the same energy $E$.

  Note that our result can be combined with the result of Johnson, Lupsasca, Strominger et al \cite{J} to give a delicate picture of the fine structure of the neighborhood of the photon shell.  It is called the photon ring the image on the observer's screen of photon's on nearly bound null geodesics. So the photon ring contains the photon shell. It was discovered in \cite{J} that there is a subring structure outside the photon shell. Indeed, if we shoot a light ray very near, a distance from the shadow edge at $\delta r_0$, it will circle $n\sim -\frac{1}{\gamma}\ln |\delta r_0|$ times  before falling into the black hole or escaping to infinity, due to the instability in the direction normal to the photon shell, where $\gamma$ is the Lyapunov exponent. A light ray that completes $n$ half orbits collects $n$ times more photons along its path.
This gives a ring on the screen subdivided into subrings labeled by the number of circulations $n$, exponentially narrower when approaching the photon shell.

 \subsection{Quasi-periodic oscillations (QPO)}
 In the case of massive particles, the geodesic dynamics has lots of bounded motions other than the bound spherical orbits. These bounded motions occur around the stable spherical orbits, thus they undergo both radial oscillations and the latitudinal oscillations when spinning around the blackhole. They are supposed to be responsible for the QPO behavior in some theories.  

We study the dynamics of these bound orbits using similar approaches as before. For simplicity, we consider perturbations that are both stationary and axisymmetric hence $E$ and $L_z$ remain to be constants of motion. Then we can use coordinates $(r,p_r,\theta,p_\theta)$ to study the radial and latitudinal dynamics of the Hamiltonian system. We   introduce action-angle coordinates $(\al_r,J_r,\al_\theta,J_\theta)\in \T\times \R_+\times \T\times \R_+$ so that the Hamiltonian is a function of $J_r$ and $J_\theta$ only and the radial frequency $\partial_{J_r}H$ and vertical frequency $\partial_{J_\theta}H$ are called fundamental frequencies in \cite{S}.
The ratio $\nu=\partial_{J_r}H/\partial_{J_\theta}H$ plays a similar role to the photon shell case as above. Again we can compute $\nu$ to verify that the assumptions for both KAM and twist map theory apply, so we have an analogue of Theorem \ref{ThmPhotonKerr0} in both Schwarzschild and Kerr cases. We refer readers to Theorem \ref{ThmQPOS} and \ref{ThmQPOK} for more details.

If we consider non axisymmetric perturbations, then we should consider a Hamiltonian system of three degrees of freedom with coordinates $(r,p_r,\theta,p_\theta,\varphi,L_z)$. The KAM nondegeneracy condition can be studied similarly. We shall outline how this can be done without working on details. Action-angle coordinates for Kerr has been constructed in \cite{S}. In general, we do not expect that KAM nondegeneracy condition holds globally in the part of phase space with bounded motions for Kerr, considering the isofrequency pairing results (c.f. \cite{WBS} etc).

\noindent {\bf Organization of the paper} The paper is organized as follows. The main body of the paper consists mainly of introductory and conceptual arguments and statements, and we put all the technical ingredients to the appendix.
 In Section \ref{SIntroHam}, we give a brief introduction to nearly integrable Hamiltonian dynamics including Liouville-Arnold theorem, KAM theorem,  and twist map theory.  In Section \ref{SS}, we study the perturbed Schwarzschild dynamics. In Section \ref{SSK}, we study the dynamics of perturbed Kerr. Finally, we have two appendices. In Appendix \ref{SNHIM} we introduce the theorem of normally hyperbolic invariant manifolds, and in Appendix \ref{AppAM}, we introduce the Aubry-Mather theory of twist maps. 

 \section{Introduction to nearly integrable Hamiltonian dynamics}\label{SIntroHam}
 In the Hamiltonian formalism of the classical mechanics, a smooth Hamiltonian function $H$ on a symplectic manifold $(M,\omega)$ is given, and defines a vector field $X_H$ through $\omega(\cdot, X_H)=dH$ which defines a dynamical system by solving the ODE $\dot x=X_H(x)$.
 \subsection{Liouville-Arnold theorem}
The classical Liouville-Arnold theorem gives important characterization of the integrable systems.

\begin{Thm}[Liouville-Arnold] \label{ThmLA}Let $H_1=H:\ M^{2n}\to \R$ be a Hamiltonian and suppose there are $H_2,\ldots,H_n:\ M\to \R$ satisfying
\begin{enumerate}
\item[(a)] $\{H_i,H_j\}\equiv0$, for all $i,j=1,\ldots,n$, where $\{H_i,H_j\}:=\omega(X_{H_i},X_{H_j})$ is the Poisson bracket;
\item[(b)] the level set $M_{\mathbf a}:=\{(q,p)\in M\ |\ H_i(q,p)=a_i,\ i=1,\ldots,n\}$ is compact;
\item[(c)] At each point of $M_{\mathbf a}$, the $n$ vectors $DH_i,\ i=1,\ldots,n$ are linearly independent.
\end{enumerate}
Then
\begin{enumerate}
\item $M_{\mathbf a}$ is diffeomorphic to $\T^n=\R^n/\Z^n$ and is invariant under the Hamiltonian flow of each $H_i$.
\item in a neighborhood $U$ of $M_{\mathbf a}$, there is a symplectic transform $\Phi(q,p)=(\theta,I)$ such that $\Phi(U)=\T^n\times (-\dt,\dt)^n$ for some $\dt>0$.
\item In the new coordinates, each $K_i:=H_i\circ \Phi^{-1}$ is a function of $I$ only so the Hamiltonian equation is
$$\dot \theta=\omega_i(I):=\frac{\partial K_i}{\partial I},\quad \dot I=0. $$
\end{enumerate}
\end{Thm}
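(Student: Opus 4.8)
The plan is to follow the classical route in three stages: realize $M_{\mathbf a}$ as a torus via a natural $\R^n$-action, manufacture the angle coordinates from that action, and then build the action coordinates and check that the resulting chart is symplectic. For the first stage, note that by (c) and the regular value theorem $M_{\mathbf a}$ is a smooth $n$-dimensional submanifold, and from (a) one has $X_{H_i}(H_j)=\{H_j,H_i\}=0$, so each $X_{H_i}$ is tangent to $M_{\mathbf a}$; since $\omega$ is nondegenerate and $DH_1,\dots,DH_n$ are linearly independent, the vectors $X_{H_1},\dots,X_{H_n}$ are linearly independent at every point of $M_{\mathbf a}$ and hence frame $TM_{\mathbf a}$. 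Moreover $[X_{H_i},X_{H_j}]=X_{\{H_i,H_j\}}=0$, so the flows $\phi^i_t$ commute, and by (b) they are complete. Thus $(t_1,\dots,t_n)\cdot x:=\phi^1_{t_1}\circ\cdots\circ\phi^n_{t_n}(x)$ defines a smooth $\R^n$-action on each connected component of $M_{\mathbf a}$; the orbit map is a local diffeomorphism, so orbits are open, and by connectedness the action is transitive on a component. The stabilizer is then a discrete subgroup $\Gamma\subset\R^n$, the same at every point since $\R^n$ is abelian, so the component is $\R^n/\Gamma$; compactness forces $\Gamma$ to be a rank-$n$ lattice, and hence the component is diffeomorphic to $\T^n$ and is visibly invariant under each flow.

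For the second stage, running the same argument over a neighborhood where (a)--(c) persist shows that the nearby level sets $M_{\mathbf b}$ are also tori, and in fact Lagrangian tori, since $\omega(X_{H_i},X_{H_j})=\{H_i,H_j\}=0$ while the $X_{H_i}$ span the tangent space. Their lattices $\Gamma_{\mathbf b}\subset\R^n$ vary smoothly; choosing a smooth basis $f_1(\mathbf b),\dots,f_n(\mathbf b)$ of $\Gamma_{\mathbf b}$ and letting $A(\mathbf b)\in\mathrm{GL}_n(\R)$ carry the standard lattice onto $\Gamma_{\mathbf b}$, I compose the orbit map with $A(\mathbf b)^{-1}$ to obtain coordinates $\al=(\al_1,\dots,\al_n)\in\T^n$ on each torus, fitting together into a smooth map $\al:U\to\T^n$. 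Using $(\al,\mathbf H)$ as a chart already puts every $X_{H_i}$ in constant-coefficient form on each torus, but this chart is in general not symplectic.

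The third stage repairs symplecticity by replacing the transverse coordinates with action variables. Since $\omega$ restricts to zero on the Lagrangian torus $M_{\mathbf a}$, it represents the zero class in $H^2(U)\cong H^2(M_{\mathbf a})$, so $\omega=d\lb$ for some $1$-form $\lb$ on $U$; set $I_j(\mathbf b)=\frac1{2\pi}\oint_{\gm_j(\mathbf b)}\lb$, where $\gm_j(\mathbf b)\subset M_{\mathbf b}$ is the closed orbit in the direction $f_j(\mathbf b)$. Because the tori are Lagrangian, $\oint_{\gm_j(\mathbf b)}\lb$ is unchanged under homologous deformations of the cycle, and a change of primitive alters it only by an additive constant (Stokes on the cylinder), so $(\al,I)$ is a bona fide coordinate system, which after shrinking $U$ has image $\T^n\times(-\dt,\dt)^n$. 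The technical heart is to verify that $\Phi(q,p)=(\al,I)$ is symplectic, i.e. $\Phi^*\big(\sum_j d\al_j\wedge dI_j\big)=\omega$; this combines the Lagrangian property of the tori (which controls the $d\al\wedge d\al$ component), the fact that the $I_j$ are functions of the Poisson-commuting $H_i$ and hence Poisson-commute, closedness of $\omega$, and a Stokes'-theorem computation on the cylinder swept out by $\gm_j(\mathbf b)$ as $\mathbf b$ varies (which pins down the mixed component). Finally, since every $H_i$ is constant on each torus $\{I=\text{const}\}$, the function $K_i=H_i\circ\Phi^{-1}$ depends on $I$ alone, and Hamilton's equations in $(\al,I)$ reduce to $\dot\al=\partial K_i/\partial I$, $\dot I=0$.

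The topological part in the first stage is routine once one has commuting, complete, pointwise-independent flows. I expect the main obstacle to be the third stage: showing that the action integrals are legitimate transverse coordinates and that $(\al,I)$ carries the standard symplectic form. This rests on the Lagrangian property of the invariant tori (from (a)), on the compactness/completeness input (from (b)) so that the cycles $\gm_j$ are genuine loops, and on a careful Stokes'-theorem bookkeeping of how $\lb$ integrates over those cycles as the level $\mathbf b$ moves; condition (c) reenters only to guarantee that the assembled map is a nondegenerate change of coordinates.
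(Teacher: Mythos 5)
The paper does not actually prove this theorem---it simply cites Section 50 of Arnold's book---and your argument is precisely the classical proof given there: Poisson-commuting, complete Hamiltonian flows generate a transitive $\R^n$-action on each compact connected component of $M_{\mathbf a}$, forcing it to be $\R^n/\Gamma\cong\T^n$, and the action-angle chart is then assembled from period integrals of a primitive of $\omega$ over the basis cycles of the nearby Lagrangian tori. Your outline is correct; the one step you leave genuinely schematic is the verification that $(\al,I)$ is symplectic (equivalently, that the $I_j$ are nondegenerate transverse coordinates---note that as written this nondegeneracy is asserted before the symplecticity computation that would justify it), which is exactly the part the cited reference handles via the multivalued generating function $S=\int p\,dq$ on the Lagrangian leaves.
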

We refer readers to Section 50 of \cite{A89} for the proof and more details. The coordinates $(\theta, I)$ are called action-angle coordinates and can be constructed by standard recipe in \cite{A89}. 

 \subsection{The Kolmogorov-Arnold-Moser theorem}\label{SSKAM}
 The Kolmogorov-Arnold-Moser theorem is an important stability result on the dynamics of nearly integrable systems.
\begin{Def}
A vector $\al\in \R^n, \ n\geq 2$ is called \emph{Diophantine} if there exist some $C>0,\tau>n-1$ such that for all $k\in \Z^n\setminus\{0\}$, we have
$$|\langle\al,k\rangle|\geq \frac{C}{|k|^\tau}. $$
 A number $\al\in\R$ is called a Diophantine number, if $(1,\al)\in \R^2$ is a Diophantine vector. In both cases, we denote $\al\in \mathrm{DC}(C,\tau)$.
\end{Def}
For each $\tau>n-1$, the set $\cup_{C>0}\mathrm{DC}(C,\tau)$ has full Lebesgue measure in $\R^n$. For a fixed $C$, the set $\mathrm{DC}(C,\tau)$ is a Cantor set of positive measure.

A version of the KAM theorem states as follows.
\begin{Thm}\label{ThmKAM}
Suppose $\det D^2 h(I)\neq 0$ for $I\in D$. Then for all $C>0,\tau>n-1$ and all $I\in D$ with $\omega(I)\in \mathrm{DC}(C,\tau)$, there exist $\eps_0>0$ and $\ell$ sufficiently large such that when $\eps\|f\|_{C^\ell}<\eps_0$, we have that there is a torus $\mathcal T_I$ that is invariant under the Hamiltonian flow generated by the Hamiltonian $H(\theta,I)=h(I)+\eps f(\theta,I)$. Moreover, there is a symplectic transform $\Phi$ whose restriction to $\mathcal T_I$ satisfies $\Phi(\mathcal T_I)=\T^n$ and the conjugated flow $\Phi\circ\phi^t_H\circ\Phi^{-1}$ is the translation $\theta\mapsto \theta+\omega(I)t$ when restricted to $\T^n$.
\end{Thm}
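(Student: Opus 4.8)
The plan is to realize $\mathcal{T}_I$ as the image of an embedding of $\T^n$ into a neighbourhood of $\T^n\times\{I\}$ carrying the linear flow $\theta\mapsto\theta+\omega_0 t$, where $\omega_0:=\omega(I)\in\mathrm{DC}(C,\tau)$ is the prescribed frequency, and to produce this embedding together with the conjugating symplectic transform $\Phi$ as the limit of a quadratically convergent Newton-type iteration. Since $f$ is only $C^\ell$, I would first reduce to the real-analytic category: using the Jackson--Moser--Zehnder analytic smoothing, write $\eps f=\sum_{j\ge 0}(f_{j+1}-f_j)$ as a telescoping sum of functions holomorphic on complex strips $\{|\mathrm{Im}\,\theta|<\rho_j\}$ of widths $\rho_j\searrow 0$, with $\|f_{j+1}-f_j\|_{\rho_j}$ bounded by $\eps\|f\|_{C^\ell}$ times a positive power of $\rho_j$ that grows with $\ell$. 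It then suffices to establish an iterative one-step lemma in the analytic setting and to interleave it with this telescoping sum.

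The heart of the proof is the one-step lemma. Given $H=h(I)+P(\theta,I)$ holomorphic on $\T^n_\rho\times D_\rho$ with $P$ of size $\delta$ and a centre $I_0\in D$ with $\omega(I_0)=\omega_0$, I look for a near-identity symplectic map $\Psi$ — generated, say, by $S(\theta,I)=\langle\theta,I\rangle+u(\theta)+\langle v(\theta),I-I_0\rangle$, or via a time-one Lie flow — so that $H\circ\Psi=h'(I)+P'(\theta,I)$ with $P'$ of size $O(\delta^2)$ on a slightly shrunk domain. Expanding to first order and equating terms yields the homological equations, which after Fourier expansion reduce to equations of the type $\langle\omega_0,\partial_\theta\rangle u=Q-\widehat{Q}(0)$; these are solved mode by mode, the small divisors $\langle\omega_0,k\rangle$ being controlled by the Diophantine bound $|\langle\omega_0,k\rangle|\ge C|k|^{-\tau}$, which together with Cauchy estimates gives $\|u\|_{\rho-\sigma}\lesssim C^{-1}\sigma^{-(\tau+n)}\|Q\|_\rho$, i.e.\ solvability at the price of a fixed loss $\sigma$ of analyticity width. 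The remaining obstruction — the average $\widehat{Q}(0)$ — is removed by translating the action: since $\det D^2h(I_0)\neq 0$, the map $I\mapsto\omega(I)$ is a local diffeomorphism near $I_0$, so one selects a new centre $I_0'=I_0+O(\delta)$ with $\omega(I_0')=\omega_0$ again, keeping the frequency pinned to the Diophantine target at every stage.

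Running the iteration with $\delta_{j+1}\approx C_j\delta_j^{\,2}$, $\rho_{j+1}=\rho_j-\sigma_j$ for a summable choice of $\sigma_j$, and centres $I_j\to I_\infty$, one verifies that the scheme converges once the initial size is small enough — precisely the hypothesis $\eps\|f\|_{C^\ell}<\eps_0$ with $\eps_0=\eps_0(C,\tau,n,h,D)$. The composition $\Phi:=\lim_{N\to\infty}\Psi_0\circ\cdots\circ\Psi_N$ converges on a neighbourhood of $\T^n\times\{I_\infty\}$ to a symplectic transform, $\mathcal{T}_I:=\Phi(\T^n\times\{I_\infty\})$ is invariant, and $\Phi\circ\phi^t_H\circ\Phi^{-1}$ restricts to $\theta\mapsto\theta+\omega_0 t$ on $\T^n$, as claimed. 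Interleaving with the analytic smoothing, the block $f_{j+1}-f_j$, of size $\sim\eps\rho_j^{\,\ell-m}$ for a fixed $m=m(n)$, is absorbed in one (or boundedly many) Newton steps on strip width $\rho_j$; demanding that the quadratic gain beat the small-divisor loss $\sigma_j^{-(\tau+n)}$ at every $j$ forces $\ell>\ell_0(n,\tau)$, which is permitted since the statement only asks for ``$\ell$ sufficiently large''.

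The step I expect to be the main obstacle is precisely this quantitative bookkeeping: simultaneously controlling the perturbation sizes $\delta_j$, the shrinking analyticity widths and action domains, the drift $I_j\to I_\infty$ of the torus centre, and — in the finitely differentiable version — matching the smoothing rate (a fixed positive power of $\rho_j$) against the super-exponential Newton gain $\delta_j\sim\delta_0^{\,2^j}$, so that the powers $\sigma_j^{-(\tau+n)}$ accumulated from the small divisors are dominated at every stage and the composed changes of variables converge. This is where the threshold on $\ell$ gets pinned down, and where the nondegeneracy hypothesis $\det D^2h\neq 0$ is indispensable, since it is what allows re-centring the torus to hold the frequency at $\omega_0$ throughout. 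The remainder is the now-classical KAM machinery; alternatively one may simply invoke Moser's smooth KAM theorem, or run the whole scheme through a Nash--Moser implicit function theorem.
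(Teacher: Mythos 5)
Your outline is the standard Kolmogorov--Arnold--Moser scheme (analytic smoothing, Newton-type one-step lemma with homological equations, Diophantine small-divisor estimates, re-centring via $\det D^2h\neq 0$, and convergence bookkeeping), which is precisely the proof the paper relies on: it gives no argument of its own and instead refers to P\"oschel's lecture on the classical KAM theorem, where this exact machinery is carried out. So your proposal is correct and takes essentially the same route as the paper's cited proof.
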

We refer readers to \cite{P} for more details. When applying KAM theory to an energy level, we may replace the nondegeneracy condition $\det D^2 h(I)\neq 0$ by Arnold's isoenergetic nondegeneracy condition $\det\left[\begin{array}{cc}
D^2h(I)& Dh(I)\\
(Dh(I))^t&0
\end{array}\right]\ne 0$
(c.f. Appendix 8 of \cite{A89} and Chapter 6.3 of \cite{AKN} for various versions of nondegeneracy conditions). 

\subsection{Energetic  reduction}\label{SSERed}
 Let $ H:\ T^*\T^2\to \R$ be a Hamiltonian system of two degrees of freedom of the form $ H(x_1,x_2,y_1,y_2)$ where $(x_1,x_2)\in \T^2$ and $(y_1,y_2)\in \R^2$. Suppose we have $\frac{\partial H}{\partial y_1}\neq 0,$ then we can fix an energy level $E$ and solve the equation $H(x_1,x_2,y_1,y_2)=E$ for $y_1=y_1(x_1,x_2,y_2)$. We can then treat $-y_1$ as a nonautonomous Hamiltonian defined on $T^*\T^1\times \T^1$ where $x_1\in \T^1$ plays the role of new time. By implicit function theorem, we have the Hamiltonian equations \begin{equation}\label{EqEnergeticReduction}
 \begin{cases}
 \frac{dx_2}{dx_1}&=\frac{\partial H}{\partial y_2}\Big/\frac{\partial H}{\partial y_1}=\frac{\partial (-y_1)}{\partial y_2},\\
  \frac{dy_2}{dx_1}&=-\frac{\partial H}{\partial x_2}\Big/\frac{\partial H}{\partial y_1}=-\frac{\partial (-y_1)}{\partial y_2}.
  \end{cases}
 \end{equation} This procedure is called the energetic reduction (Section 45 of \cite{A89}).

 \subsection{Dynamics of twist maps}\label{SSTwist}
 Twist map is defined as follows.
 \begin{Def}\label{DefTwist}
 A diffeomorphism $\phi: \T^1\times [0,1]\to \T^1\times [0,1]$ preserving the boundary is called an area preserving twist map if the following holds:
 \begin{enumerate}
 \item $\phi$ is symplectic, i.e. $\phi^*\omega=\omega,\ \omega=dx\wedge dy;$
 \item for each $x\in \T$, the map $\pi_1 \phi:\ \{x\}\times [0,1]\to \T^1$ is a local homeomorphism, where $\pi_1$ means the projection to the first $(\T^1)$ component.
\end{enumerate}
 \end{Def}
 One example of a twist map is the map \eqref{EqTwist0}. 
 There is a version of KAM theorem developed for twist maps by Moser \cite{M2} that we shall mainly apply later in the paper.
 \begin{Thm}
 Let $\phi_0:\ \T^1\times [0,1]\to \T^1\times [0,1]$ be a twist  map as in \eqref{EqTwist0} and $\phi_\eps:=\phi_0+\eps f:\ \T^1\times [0,1]\to \T^1\times [0,1]$ be a perturbation of $\phi_0$. For all $C>0,\tau>1$, and all $I\in [0,1]$ with $\nu(I)\in \mathrm{DC}(C,\tau)$, there exists $\eps_0>0$ and $\ell$ sufficiently large such that when $\|\eps f\|_{C^\ell}<\eps_0$, we have that there is a homotopically nontrivial circle $\gamma_I$ that is invariant under the map $\phi_\eps$ and the dynamics $\phi_\eps:\ \gamma_{I_*,\eps}\to \gamma_{I_*,\eps}$ can be conjugated to the rotation $\theta\mapsto \theta+\nu(I_*)$ on $\T^1$.
 \end{Thm}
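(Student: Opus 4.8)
The plan is to reduce the statement to the Hamiltonian KAM theorem (Theorem \ref{ThmKAM}) already at our disposal, by suspending the map to a Hamiltonian flow. First I would set $h(I):=\int_0^I\nu(s)\,ds$, so that $h'=\nu$ and $\phi_0$ is exactly the time-$1$ map (in a time variable $t$) of the autonomous flow of $h$ on $T^*\T^1$, i.e.\ $(\theta,I)\mapsto(\theta+h'(I),I)$. Assuming --- as is standard for persistence of invariant circles, cf.\ Birkhoff --- that $\phi_\eps$ is an \emph{exact} area-preserving twist map, a classical interpolation argument realizes $\phi_\eps$ as the time-$1$ map of the flow of a $1$-periodic-in-$t$ Hamiltonian $H_\eps(\theta,I,t)=h(I)+\eps g(\theta,I,t)$ with $\|g\|_{C^{\ell-1}}=O(\|f\|_{C^{\ell}})$. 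Then I would pass to the extended autonomous system on $T^*(\T^1\times\T^1)$ with Hamiltonian $\mathcal H(\theta,I,t,\mathcal E)=\mathcal E+H_\eps(\theta,I,t)$, whose actions are $(I,\mathcal E)$ and angles $(\theta,t)$; since $\partial_{\mathcal E}\mathcal H\ne 0$, fixing the level $\{\mathcal H=0\}$ and performing the energetic reduction of Section \ref{SSERed} recovers the time-periodic system, hence $\phi_\eps$ as a Poincar\'e map.

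Next I would verify the nondegeneracy hypothesis. The integrable part is $\mathcal H_0=\mathcal E+h(I)$ with frequency vector $(\partial_I\mathcal H_0,\partial_{\mathcal E}\mathcal H_0)=(\nu(I),1)$, and Arnold's isoenergetic nondegeneracy holds because
$$\det\bmt{ccc} h''(I) & 0 & h'(I)\\ 0 & 0 & 1\\ h'(I) & 1 & 0\emt=-h''(I)=-\nu'(I)\ne 0,$$
which is exactly the strict monotonicity of $\nu$. Hence the isoenergetic version of Theorem \ref{ThmKAM} applies on $\{\mathcal H=0\}$: for $I_*$ with $\nu(I_*)\in\mathrm{DC}(C,\tau)$ --- equivalently the frequency vector $(\nu(I_*),1)$ Diophantine in $\R^2$ --- there are $\eps_0>0$ and $\ell$ large so that whenever $\|\eps f\|_{C^\ell}<\eps_0$ there is an invariant $2$-torus $\mathcal T_{I_*}\subset\{\mathcal H=0\}$, a $C^1$-small deformation of $\{I=I_*\}$, carrying a linear flow of frequency $(\nu(I_*),1)$, together with a symplectic conjugacy $\Phi$ with $\Phi(\mathcal T_{I_*})=\T^2$. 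Finally I would set $\gamma_{I_*,\eps}:=\mathcal T_{I_*}\cap\{t=0\}$: this is a circle which, being a $C^1$-small perturbation of the graph $\{I=I_*\}$, is itself a graph over the $\theta$-circle and hence homotopically nontrivial in $\T^1\times[0,1]$; it is invariant under the time-$1$ map, which is $\phi_\eps$; and the restriction of $\Phi$ conjugates $\phi_\eps|_{\gamma_{I_*,\eps}}$ to the translation $\theta\mapsto\theta+\nu(I_*)$, since on $\T^2$ the time-$1$ map acts as translation by $(\nu(I_*),1)\equiv(\nu(I_*),0)$.

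An alternative I would keep in mind is the direct KAM iteration: look for the invariant circle as an embedding $\theta\mapsto(\theta+u(\theta),\,I_*+v(\theta))$ with $u,v$ small $\Z$-periodic, solving $\phi_\eps\circ h_\eps(\theta)=h_\eps(\theta+\omega)$ with $\omega=\nu(I_*)$, via a Newton scheme in which each step reduces the error from $O(\eps)$ to $O(\eps^2)$ up to a fixed loss of derivatives. Each step requires solving cohomological equations $g(\theta+\omega)-g(\theta)=\psi(\theta)$, which is possible when $\psi$ has zero mean and then obeys a tame estimate $\|g\|_{C^s}\lesssim C^{-1}\|\psi\|_{C^{s+\tau+1}}$ thanks to the Diophantine condition; here the twist $\nu'(I_*)\ne 0$ is what cancels the zero-mean obstruction in the $I$-direction (by shifting $I_*$), while exactness kills it in the $\theta$-direction. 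One would then iterate with shrinking analyticity widths, or Nash--Moser smoothing operators in the finitely differentiable category, and check that quadratic convergence beats the polynomially growing losses.

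The hard part, in the first route, will be the suspension step: interpolating a given exact symplectic twist map by a Hamiltonian isotopy with quantitative $C^k$ control of the generating Hamiltonian $g$ --- everything else is bookkeeping on top of Theorem \ref{ThmKAM}. In the direct route, the hard part is the usual heart of KAM theory, namely making the quadratic Newton convergence dominate the loss of derivatives in the cohomological equations, which forces smoothing and a carefully tuned sequence of shrinking domains. In both approaches the strict monotonicity of $\nu$ (the twist) and the exactness of $\phi_\eps$ are the structural inputs that make the scheme close.
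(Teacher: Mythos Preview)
The paper does not prove this theorem. It is stated as a classical result due to Moser, with the reference \cite{M2}, and used thereafter as a black box; there is no argument in the paper to compare your proposal against.

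Your strategy is nonetheless reasonable. The suspension route---interpolating the exact symplectic map by a time-periodic Hamiltonian, passing to the extended autonomous phase space, and invoking the isoenergetic KAM theorem---is a legitimate reduction, and your computation of the isoenergetic determinant $-\nu'(I)$ is correct. One point to be careful about: you want $\phi_\eps$ to be the time-$1$ map of $h(I)+\eps g$, but the natural interpolation gives $\phi_\eps=\psi_\eps\circ\phi_0$ with $\psi_\eps$ the time-$1$ map of a small Hamiltonian $G$; concatenating the flow of $h$ with the flow of $G$ is not literally the flow of $h+\eps g$, so you would either build a piecewise-in-time periodic Hamiltonian or work directly with generating functions. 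This is exactly the technical step you flagged as the hard part, and it is indeed where the work lies.

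Your alternative---solving for the invariant circle as a graph $\theta\mapsto(\theta+u(\theta),I_*+v(\theta))$ by a Newton/Nash--Moser scheme, using the Diophantine condition to solve the cohomological equations and the twist to kill the mean obstruction---is essentially Moser's original proof in \cite{M2}. If you want to match the historical source the paper cites, that is the route to write out.
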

 Another important example of twist maps is the standard map defined as follows: $\Phi:\ \T^2\to \T^2$ via $$(x,y)\mapsto (x+y+k \sin x, y+k \sin x)$$
 where $k$ is a parameter.

 For a general nonlinear twist map (not necessarily nearly integrable), we can introduce a variational principle and define globally minimizing orbits which also has the notion of rotation number playing a similar role as $\nu(I)$ as in \eqref{EqTwist0}. In Aubry-Mather theory, the globally minimizing orbits are classified according to their rotation numbers, denoted by $\mathcal M_\rho$. We give an overview of this theory in Appendix \ref{AppAM}.

 The Aubry-Mather theory gives us the following picture for the dynamics of a twist map. Let $\phi_\eps$ be a smooth twist map and when $\eps=0$, the map $\phi_0$ is of the form \eqref{EqTwist0}. Then for $\eps$ sufficiently small, we have

 \begin{enumerate}
 \item KAM implies that for Diophantine irrational rotation numbers ($\rho \in DC(C,\tau)$ for $C>0$ independent of $\eps$) the set $\cM_\rho$ is a invariant curve restricted on which the map $\phi$ is conjugate to an irrational rotation, i.e. there exists a diffeomorphism $h:\ \cM_\rho\to \T$ such that $h\phi=h+\rho.$
 \item For each irrational rotation number $\rho$, the set $\cM_\rho$ is either a homotopically nontrivial invariant curve, or a Denjoy minimal set.
 \item For rational rotation numbers $\rho$, it may happen that there exists a homotopically nontrivial curve consists of periodic orbits with the same rotation number $\rho\in \mathbb Q$. However, typically, such curve does not exist and $\cM_\rho$ consists of one single periodic orbit with two homoclinic orbits approaching it in both the future and the past.
 \item Whenever there is a region bounded by two neighboring invariant circles, there exist orbits crossing the gap and visiting any two neighborhoods of the boundary circles.
 \item In a neighborhood of an elliptic periodic point of least period $q$ ($D\phi^q$ has no eigenvalues off the unit circle), the above (1),(2),(3),(4) apply also when the twist condition is satisfied. Such a neighborhood is called an elliptic island.
 \end{enumerate}

\section{The Schwarzschild spacetime}\label{SS}
 \subsection{Schwarzschild dynamics}\label{SSMetricS}
 The Schwarzschild metric is as follows
 \begin{equation}
 ds^2=-\left(1-\frac{2M}{r}\right)d\tau^2+\frac{1}{1-\frac{2M}{r}} dr^2+r^2(d\theta^2+\sin^2\theta d\varphi^2)=g_{\mu\nu}dx^\mu dx^\nu
\end{equation}
 with $x^0=\tau$ the time coordinate, $x^1=r$ is the polar radius, and $x^2=\theta\in [0,\pi], x^3=\varphi\in[0,2\pi)$ are the spherical coordinates where $\theta$ is the latitudinal angle and $\varphi$ is the azimuthal angle. Here $M$ is the mass of the blackhole, and $r=2M$ is the event horizon. We are only interested in the dynamics of a particle moving outside of the event horizon. In the following, we denote $\al=\left(1-\frac{2M}{r}\right)$.

 We view the metric as twice of a Lagrangian $L(x,\dot x)=\frac12 g_{\mu\nu}\dot x^\mu \dot x^\nu$ (here $\dot x=\frac{dx}{dt}$ means the derivative with respect to the particle's proper time   $t$) and obtain its conjugate Hamiltonian after a formal Legendre transform as follows
 \begin{equation}\label{EqHamS}
\begin{aligned}
& \begin{cases}
 \frac{d\tau}{dt}&=-\al^{-1}p_\tau, \\
 \frac{dr}{dt}&=\al p_r,\\
 \frac{d\theta}{dt}&= \frac{p_\theta}{r^2},\\
  \frac{d\varphi}{dt}&=\frac{p_\varphi}{r^2\sin^2\theta},\\
\end{cases}\\
  2H(x,p)&=2(\dot x^\mu p_{x^\mu}-L(x,\dot x))=g^{\mu\nu}p_\mu p_\nu\\
  &=-\al^{-1}p_\tau^2+\al p_r^2+r^{-2}\left(p_\theta^2+\frac{1}{\sin^2\theta} p_\varphi^2\right).
\end{aligned}
\end{equation}
The Hamiltonian system has four conserved quantities as $H, p_\tau, p_\varphi, L^2=p_\theta^2+\frac{1}{\sin^2\theta} p_\varphi^2$ with the physical meanings: the total Hamiltonian, the particle energy, the third component of the angular momentum and the square of the total angular momentum respectively. In the following, we also use $L_z$ for $p_\varphi$ and $E$ for $p_\tau$.

We consider a particle moving in the Schwarzschild spacetime along geodesics with invariant mass $\mu$, so along a geodesic, we have
$g_{\mu\nu} \frac{dx^\mu}{dt}\frac{dx^\nu}{dt}=-\mu^2. $
We choose $\mu^2=0$ for massless particles and $\mu^2=1$ for massive particles, and the geodesics in the former case is called a lightlike geodesic and the latter timelike.
\subsubsection{The radial dynamics, critical points and homoclinic orbits}\label{SSSABC}
We next analyze the radial dynamics. Setting $p_\tau=E$, $H=-\frac12\mu^2$ and $\frac{dr}{dt}=\al  p_r$, we obtain \eqref{EqHamS}
 \begin{equation}\label{ThmRadialS}
 \begin{aligned}
\frac{1}{2}E^2&=\frac12\left(\frac{dr}{dt}\right)^2+V(r),\\
2V(r)&=\left(\mu^2+\frac{L^2}{r^2}\right)\al =\mu^2-\frac{2\mu^2 M}{r}+\frac{L^2}{r^2}-\frac{2ML^2}{r^3}.
\end{aligned}
\end{equation}
Thus we may visualize the radial dynamics in the $(r,\frac{dr}{dt})$-plane as a particle moving in the potential well of $V$. The critical points of $V$ corresponds to orbits with $r=\mathrm{const}$. Setting
$\frac{dV}{dr}=0$ we obtain
$$\mu^2 M r^2-L^2 r+3ML^2=0$$
with solutions
\begin{equation}\label{Eqrpm}r_\pm=\frac{L^2\pm \sqrt{L^4-12M^2L^2}}{2M} \end{equation}
for $\mu^2=1,$ and $r=3M$ for $\mu=0$.

The case of $\mu^2=0$ differs drastically from the case of $\mu^2=1$. Indeed, when $\mu^2=0$, we have $V(r)=\frac{L^2}{r^2}-\frac{2ML^2}{r^3}\to -\infty$ as $r\to 0$, $V(\infty)=0$, and the point $r=3M$ is the global maximum point and the unique critical point of $V$. This means that the bound spherical orbits are the only bound orbits in the lightlike case.

\begin{figure}
\begin{minipage}[t]{0.5\linewidth}
\centering
\includegraphics[width=2.2in]{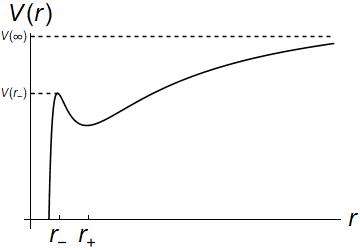}
\subcaption{$\mu^2=1$, $L^2>L^2_{isco}$ but not large}
\label{fig:side:a}
\end{minipage}%
\begin{minipage}[t]{0.5\linewidth}
\centering
\includegraphics[width=2.2in]{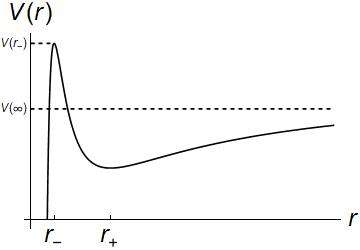}
\subcaption{$\mu^2=1$, large $L^2$}
\label{fig:side:b}
\end{minipage}
\caption{The Schwarzschild effective potential}
\label{FigV}
\end{figure}

We consider next the case $\mu^2=1$. For $L$ satisfying $L^2>12M^2$, there are two solutions $r_+\neq r_-$ by \eqref{Eqrpm}. The root $r_-$ is a local maximum of $V$ and $r_+$ is a local minimum. All orbits with initial value $r<r_-$ plunge into the blackhole.
When $L^2=12M^2$, the two critical points of $V$ coincide. The value $12M^2$ is called $L_{isco}^2$ where \emph{isco} stands for the \emph{innermost stable circular orbit}. For all values $L^2<L^2_{isco}$, all orbits plunge into the central blackhole.  The set $\{r=r_-\}$ is called the \emph{photon sphere}, orbits on which remain on a circle with constant radius $r_-$ and are unstable under radial perturbations.

\subsubsection{The normally hyperbolic invariant manifold}
Denote by $r_c=3M$ in the $\mu^2=0$ case and $r_c=r_-$ in the $\mu^2=1$ case.
 If we fix the constant $p_\tau^2=E^2=2V(r_c)$, which is a function of $L^2$ and $M$, then we can treat the Hamiltonian $H$ in \eqref{EqHamS} as a system of three degrees of freedom depending on the variables $r,p_r,\theta,p_\theta,\varphi,p_\varphi$. The four dimensional submanifold $$N:=\left\{r=r_c,\ \frac{dr}{dt}=0\right\}$$ is a normally hyperbolic invariant manifold (NHIM) in the sense of Definition \ref{DefNHIM}. One remarkable property of the NHIM is its persistence under small perturbations, which is summarized in Theorem \ref{ThmNHIM} in Appendix \ref{SNHIM}. The normal Lyapunov exponents are $\pm \sqrt{-V''(r_c)}$ obtained as the eigenvalues of the linearized radial Hamiltonian equation $$\left(\begin{array}{c} \dt r\\
\dt\dot r\end{array}\right)'=\left(\begin{array}{cc}0&1\\
-V''(r_c)&0\end{array}\right)\left(\begin{array}{c} \dt r\\
\dt\dot r\end{array}\right)$$
 at $r=r_c$. The coefficient matrix has two eigenvalues of opposite signs due to $V''(r_c)<0$. The Hamiltonian system restricted to $N$ is a Hamiltonian system of two degrees of freedom of the form
 \begin{equation}\label{HamN}H_N(\theta,p_\theta,\varphi,p_\varphi)=r_c^{-2}\left(p_\theta^2+\frac{1}{\sin^2\theta} L_z^2\right)-\al^{-1}(r_c) E^2.
\end{equation}
\subsection{Dynamics on the photon shell}\label{SSShellS}
The system \eqref{HamN} has two degrees of freedom and is integrable in the sense of Theorem \ref{ThmLA}. Note that this is the case for both $\mu^2=0$ and $\mu^2=1$. 

The nontrivial part of the Hamiltonian \eqref{HamN} is the square of the total angular momentum, denoted by $$L^2=p_\theta^2+\frac{1}{\sin^2\theta} L_z^2.$$
The dynamics of this Hamiltonian is as follows. First the Hamiltonian has no dependence on $\varphi$ so $L_z$ is a constant of motion. For fixed $L_z$, the resulting system describes only the latitudinal motion, which can be considered as a particle moving in the potential well of $\frac{1}{\sin^2\theta} L_z^2$, which is a  infinitely deep single well. Let  $\theta^-<\theta^+$ be two zeros of the $L^2- \frac{1}{\sin^2\theta} L_z^2$, then along an orbit, the latitudinal angle moves in the spherical strip $\theta^-\leq \theta\leq \theta^+$.

  We next introduce action-angle coordinates following Section 50 of \cite{A89}. We introduce the action variable
\begin{equation}\label{EqJtheta}J_\theta=\frac1{2\pi}\oint p_\theta\,d\theta=\frac{1}{\pi}\int_{\theta^-}^{\theta^+}\sqrt{L^2-\frac{1}{\sin^2\theta} L_z^2}\,d\theta.\end{equation}
 The other action variable is taken to be $L_z$ and their dual angular variables are denoted by $\al_\theta,\al_\varphi\in \T^1$ respectively, introduced to preserve the symplectic form
 $$dp_\theta\wedge d\theta+d L_z\wedge d\varphi=dJ_\theta\wedge d\al_\theta+dL_z\wedge d\al_\varphi.$$
In the new coordinates the Hamiltonian $H:=L^2(J_\theta,L_z):\ \R_+\times \T^1\times \R\times \T^1\to \R$ is independent of the angular variables $\al_\theta,\al_\varphi$. Then the Hamiltonian equation becomes $\begin{cases}\dot\al_\theta=\frac{\partial H}{\partial J_\theta}\\
\dot\al_\varphi=\frac{\partial H}{\partial L_z}
 \end{cases}$. Hence the Hamiltonian flow is a translation on $\T^2$ via $(\al_r,\al_\theta)\mapsto (\al_\theta,\al_\varphi)+ t\omega,\mathrm{mod}\ \Z^2,\ \ t\in \R$, where $\omega=(\omega_\theta,\omega_\varphi)=(\frac{\partial H}{\partial J_\theta},\frac{\partial H}{\partial L_z})$ is called the frequency.

When we restrict to the level set $H_N=-\frac12\mu^2$ and perform the energetic reduction procedure as we did in Section  \ref{SSERed} by taking $-J_\theta$ as the new Hamiltonian, we see that we get a the resulting time-1 map $\phi_0:\ \T\times I\to \T\times I$ via $(\al_\varphi, L_z)\mapsto (\al_\varphi+\nu, L_z)$,  where $\nu:=\omega_\varphi/\omega_\theta$ is exactly the rotation number in the integrable case, and $I=[\eps, |L|]$ for any small $\eps>0$. In order to apply Moser's theorem and the twist map theory we have to verify the twist condition (Definition \ref{DefTwist}(2)), which is exactly that the ratio $\omega_\varphi/\omega_\theta$ depends on its parameter $L_z$ strictly monotonically.

 The following result show that the map $\phi_0$ does not satisfy the twist condition, so neither Moser's theorem nor the twist map theory applies to a small perturbation of $\phi_0$ {\it a priori}.
\begin{Prop}\label{PropConvexityS}
The Hamiltonian $H_N$ written in action-angle coordinates satisfies
\begin{enumerate}
\item $\frac{\partial H_N}{\partial J_\theta}=\frac{\partial H_N}{\partial L_z}$;
\item for fixed $L_z>0$, as a function of $J_\theta$, the Hamiltonian is strictly convex in the region $J_\theta>0$.
\end{enumerate}
\end{Prop}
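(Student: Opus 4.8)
The plan is to evaluate the action integral \eqref{EqJtheta} in closed form; it will turn out that $L$ is an affine function of $(J_\theta,L_z)$, and then both assertions reduce to a one-line differentiation of \eqref{HamN}.

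First I would change variables $x=\cos\theta$ in \eqref{EqJtheta}. The turning points $\theta^\mp$ are the two solutions of $\sin\theta=|L_z|/L$ in $(0,\pi)$ and are symmetric about $\pi/2$, so $x$ runs over $[-x_0,x_0]$ with $x_0:=\cos\theta^-=\sqrt{1-L_z^2/L^2}$; using $\sin^2\theta=1-x^2$ and $L^2(1-x^2)-L_z^2=L^2(x_0^2-x^2)$, the integral becomes
\[
J_\theta=\frac{L}{\pi}\int_{-x_0}^{x_0}\frac{\sqrt{x_0^2-x^2}}{1-x^2}\,dx .
\]
This is elementary: the substitution $x=x_0\sin\phi$ turns it into $\tfrac{2Lx_0^2}{\pi}\int_0^{\pi/2}\frac{\cos^2\phi}{1-x_0^2\sin^2\phi}\,d\phi$, and splitting $\frac{\cos^2\phi}{1-x_0^2\sin^2\phi}=\frac1{x_0^2}-\frac{1-x_0^2}{x_0^2}\cdot\frac1{1-x_0^2\sin^2\phi}$ together with the standard value $\int_0^{\pi/2}\frac{d\phi}{1-x_0^2\sin^2\phi}=\frac{\pi}{2\sqrt{1-x_0^2}}$ collapses everything to $J_\theta=L\bigl(1-\sqrt{1-x_0^2}\bigr)=L-|L_z|$. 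Hence for $L_z>0$ we get the clean identity $L=J_\theta+L_z$, i.e. $L^2(J_\theta,L_z)=(J_\theta+L_z)^2$.

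With this in hand the proposition is immediate. Substituting into \eqref{HamN}, where $r_c$ and $E$ appear only as fixed constants, the Hamiltonian in action-angle coordinates is $H_N=r_c^{-2}(J_\theta+L_z)^2-\al^{-1}(r_c)E^2$. Then $\partial_{J_\theta}H_N=2r_c^{-2}(J_\theta+L_z)=\partial_{L_z}H_N$, which is (1) (consistent with $\nu\equiv1$ noted earlier); and for fixed $L_z>0$ we have $\partial^2_{J_\theta}H_N=2r_c^{-2}>0$, so $H_N$ is strictly convex in $J_\theta$, in particular on $\{J_\theta>0\}$, which is (2). The only step needing care is the evaluation of \eqref{EqJtheta} — one must keep track of orientation in the substitution $x=\cos\theta$ so that the limits come out symmetric, after which the cancellations do the rest; the same computation with $|L_z|$ in place of $L_z$ handles $L_z<0$, and $L_z=0$ gives $J_\theta=L$.
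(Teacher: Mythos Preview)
Your proof is correct and considerably more direct than the paper's. Both approaches hinge on the identity $\partial J_\theta/\partial L_z=-1$, but you obtain it (and more) by evaluating the action integral \eqref{EqJtheta} in closed form: the substitutions $x=\cos\theta$ and $x=x_0\sin\phi$ are standard and your arithmetic checks out, yielding the classical central-force identity $J_\theta=L-|L_z|$. From this explicit formula, items (1) and (2) are immediate one-line differentiations.

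The paper instead argues each item separately and more abstractly. For (1) it observes that the integrand of $\partial J_\theta/\partial L_z$ equals $-d\varphi/d\theta$ along an orbit, and then invokes the geometric fact that every photon-sphere geodesic in Schwarzschild lies in a plane through the origin and is closed, so the integral over one $\theta$-period equals $-\pi$. For (2) the paper never writes down $L^2(J_\theta,L_z)$ explicitly; it develops Mather's variational framework, identifies $\tfrac12 L^2$ with the $\alpha$-function of the Lagrangian, proves convexity of $\alpha$ by the usual minimization inequality, and then deduces \emph{strict} convexity via differentiability of the Legendre dual $\beta$ in the radial direction.

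What each approach buys: your computation is elementary, self-contained, and gives the sharpest possible statement (an explicit quadratic). The paper's route for (1) is conceptually pleasant in that it ties the equality of frequencies directly to the planarity of Schwarzschild orbits; its route for (2) introduces the $\alpha$/$\beta$ machinery, which is heavy here but is the natural tool when the action integral does \emph{not} admit a closed form, as in the Kerr case treated later. One small caveat on your write-up: the claim that ``$r_c$ and $E$ appear only as fixed constants'' is literally true in the massless case ($r_c=3M$), but in the massive case $r_c=r_-$ depends on $L^2$; since $H_N$ is still a function of $L=J_\theta+L_z$ alone, item (1) survives unchanged, while item (2) then requires checking that $H_N$ is strictly convex as a function of $L$ --- a point the paper's proof also glosses over by working with $\tfrac12L^2$ rather than $H_N$ itself.
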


\begin{proof}
We first prove item (1).
  Let $f$ be the map relating the two sets of constants of motion $(J_\theta, L_z)$ and $(L^2,L_z)$, i.e.  $f(J_\theta, L_z)=(L^2,L_z)$. From $Df\cdot Df^{-1}=I$, we obtain
$$\left[\begin{array}{ccc}
\frac{\partial L^2}{\partial J_\theta}&\frac{\partial L^2}{\partial L_z}\\
\frac{\partial L_z}{\partial J_\theta}&\frac{\partial L_z}{\partial L_z}
\end{array}\right]\left[\begin{array}{ccc}
\frac{\partial J_\theta}{\partial L^2}&\frac{\partial J_\theta}{\partial L_z}\\
0&1
\end{array}\right]=I.$$
This implies $Df=\left[\begin{array}{ccc}
\left(\frac{\partial J_\theta}{\partial L^2}\right)^{-1}&- \frac{\partial J_\theta}{\partial L_z}\cdot\left(\frac{\partial J_\theta}{\partial L^2}\right)^{-1}\\
0&1
\end{array}\right].$ Denote by $$\omega:=DH(J_\theta,L_z)=\left( \left(\frac{\partial J_\theta}{\partial L^2}\right)^{-1}, - \frac{\partial J_\theta}{\partial L_z}\cdot\left(\frac{\partial J_\theta}{\partial L^2}\right)^{-1}\right)$$ the first row of $Df$.

We next show that $- \frac{\partial J_\theta}{\partial L_z}=1$ so that $\omega=\frac{\partial L^2}{\partial J_\theta}(1,1)$. Denoting by $\theta^-<\theta^+$ the two roots of $p_\theta$, then we have from \eqref{EqJtheta}
\begin{equation*}\label{EqAACoordS}
\begin{aligned}
\frac{\partial J_\theta}{\partial L^2}&=\frac{1}{2\pi}\int_{\theta^-}^{\theta^+} \left(L^2-\frac{L_z^2}{\sin^2 \theta}\right)^{-1/2} \,d\theta,\\
\frac{\partial J_\theta}{\partial L_z}&=-\frac{1}{\pi}\int_{\theta^-}^{\theta^+} \frac{L_z}{\sin^2 \theta}\left(L^2-\frac{L_z^2}{\sin^2 \theta}\right)^{-1/2} \,d\theta.\\
\end{aligned}
\end{equation*}
Note that the integrand of the last expression is exactly $\frac{d\varphi}{d\theta}$ (see \eqref{EqHamS}). Each Schwarzschild geodesic on the photon sphere lies on a fixed plane passing through the origin hence is a closed orbit, thus we get
$$\int_{0}^{2\pi}d\varphi=2\int_{\theta^-}^{\theta^+}\frac{L_zd\theta}{\sin^2\theta\sqrt{L^2-\frac{L_z^2}{\sin^2 \theta}}}.$$
This gives $\frac{\partial J_\theta}{\partial L_z}=-1$, hence  proves item (1) by applying the  implicit function theorem.

We next work on item (2). We develop Mather's variational method \cite{Ma1} to our new setting.
We consider the $(\theta,p_\theta)$-plane with $\theta\in [0,\pi],\ p_\theta\in \R$. The point $(\pi/2,0)$ is an elliptic fixed point. If we remove this point from the plane we get a topological cylinder on which we introduce the polar coordinates $F:(\rho,\psi)\in \R_+\times \T\mapsto (p_\theta,\theta,)$, i.e. $p_\theta=\sqrt\rho \sin\psi,\ \theta-\pi/2=\sqrt\rho \cos\psi$. On $\R_+\times \T$, we introduce the differential form $\eta_c=f(\psi)d\psi$, where  $f:\ \T\to \R_+$, with cohomology class $c=\int_{\T} f(\psi)\,d\psi$. It becomes a differential form on $[0,\pi]\times \R\setminus\{(\pi/2,0)\}$ after being pulled-back by $F^*$, still denoted by $\eta_c$.

We next introduce the space $\mathfrak M$ of probability measures defined on the phase space that are invariant under the Hamiltonian flow. 
For each $\mu\in \mathfrak M$, we introduce
$$A_c(\mu)=\int \mathbf L-\eta_c\,d\mu,$$
where $\mathbf L=\frac12\dot \theta^2+\frac12\dot\varphi^2\sin^2\theta$ is the formal Legendre transform of the Hamiltonian $H:=\frac{1}{2}L^2$.
For given $c=(c_\theta,c_\varphi)$ with $c_\theta>0$, we choose the differential 1-form $\eta_c$ to be $$p_\theta\,d\theta+L_zd\varphi:=\pm \sqrt{L^2-\frac{L_z^2}{\sin^2\theta}}\,d\theta+L_zd\varphi$$ such that $\frac1\pi\int \sqrt{L^2-\frac{L_z^2}{\sin^2\theta}}\,d\theta=c_\theta$ and  $c_\varphi=L_z$. Note that this fixes the constant $L^2$. Then we get
\begin{equation*}
\begin{aligned}
A_c(\mu)&=\int \mathbf L-\eta_c\,d\mu. \\
&=\int \mathbf L-p_\theta\cdot \dot \theta\,-L_z\dot \varphi d\mu\\
&=\int \frac12(\dot \theta-p_\theta)^2+\frac12(\dot\varphi \sin \theta-\frac{L_z}{\sin\theta})^2-\frac{L_z^2}{2\sin^2\theta}-\frac12p_\theta^2\,d\mu\\
&=-\frac12 L^2+\int \frac12(\dot \theta-p_\theta)^2+\frac12(\dot\varphi \sin \theta-\frac{L_z}{\sin\theta})^2\,d\mu.
\end{aligned}
\end{equation*}
If we minimize $A_c(\mu)$ among all $\mu\in \mathfrak M$, we get that the only choice is $-\al(c):=\inf_{\mu\in \mathfrak M}A_c(\mu)=-\frac12L^2$, and $\dot \theta= p_\theta,\ \dot\varphi =\frac{L_z}{\sin^2\theta}$. Note that the latter determines uniquely an invariant torus of the Hamiltonian flow for given $L_z$ and $L^2$, on which the minimizing measure $\mu$ supports.

Since we have the cohomology class $(c_\theta,L_z)$ is exactly the action variables $(J_\theta,L_z)$, we get that the function $\al$ as a function of the cohomology classes is exactly the total energy $\frac12 L^2$ as a function of the action variables $(J_\theta,L_z)$.
We next prove that $\al(c)$ is convex. Let $\mu_1$ and $\mu_2$ be such that $-\al(c_1)=A_{c_1}(\mu_1),\ -\al(c_2)=A_{c_2}(\mu_2)$ and $\mu_\lambda$ be such that $-\al(\lambda c_1+(1-\lambda)c_2)=A_{\lambda c_1+(1-\lambda)c_2}(\mu_\lambda)$. Then we have
\begin{equation*}
\begin{aligned}
-\al(\lambda c_1+(1-\lambda)c_2)&=A(\mu_\lambda)=\int\mathbf L-\eta_{\lambda c_1+(1-\lambda)c_2}\,d\mu_\lambda\\
&=\lambda \int\mathbf L-\eta_{c_1}\,d\mu_\lambda+(1-\lambda) \int\mathbf L-\eta_{c_2}\,d\mu_\lambda\\
&\geq \lambda \int\mathbf L-\eta_{c_1}\,d\mu_1+(1-\lambda) \int \mathbf L-\eta_{c_2}\,d\mu_2\\
&=-\lambda\al(c_1)-(1-\lambda)\al(c_2).
\end{aligned}
\end{equation*}
This shows that $\al$ is convex in $c$, hence the Hamiltonian $\frac12L^2$ is convex in $J_\theta, L_z$. It remains to prove the strict convexity in the direction transverse the energy levels. We introduce the $\beta$-function that is the Legendre dual of $\al$
$$\beta(h)=\sup_c\langle h,c\rangle-\al(c). $$
Following the argument of \cite{Ca}, we get that $\beta$ is differentiable in the radial direction. The Legendre transform gives $h=D\al(c)=(\frac{\partial H}{\partial  J_\theta}, \frac{\partial H}{\partial L_z})$ attaining the sup in the definition of $\beta$. By the first item of the Proposition, we get that $\al(c)=f(J_\theta+L_z)$ for some convex function $f$. Then for $L_z$ fixed, the differentiability of $\beta$ in the radial direction implies that $f$ is strictly convex. 

\end{proof}

The first item reflects the fact that all orbit on the photon sphere lies on a plane passing through the origin hence is periodic and they all have the same period on a fixed energy level. So neither KAM nor twist map theory can be applied at this moment. However, when a perturbation is added, it may introduce certain twistness, which require further more careful analysis depending on the form of the perturbation.


\subsection{KAM and QPO for Schwarzschild}
In the case of $\mu^2=1$, we have other bound orbits than those on the photon shell. In particular, orbits with $r=r_+$ are stable circular. In a neighborhood of this stable circular orbit, there are a lot of bound orbits.
 \subsubsection{KAM and twist maps theory for the QPO}
 We introduce the following bounded part of the phase space where Liouville-Arnold theorem applies. Let $C$ be a large constant and define
 \begin{equation}\label{EqBS}\mathcal{B}(C):=\left\{ V(r_+)< \frac{E^2}{2}<\min\{V(r_-),V(\infty)\},\quad |L_{isco}|<|L|<C\right\}\cap H^{-1}(-\frac12).\end{equation}


In this set $\mathcal{B}(C)$, we note that both the radial $r$-component and the latitudinal $\theta$-component are lying in potential wells around the local minimums, so nearby orbit will have small oscillations in the $r$- and $\theta$-components (c.f. Figure \ref{FigV}).

Suppose for simplicity that we consider a stationary and axisymmetric perturbation of the Schwarzchild metric so  that the perturbation has no dependence on $\varphi$ and $\tau$ then $E$ and $L_z$ are constants of motion. Fixing a value for $E$ and $L_z$ satisfying the definition of $\mathcal{B}(C)$, then the resulting Hamiltonian system has  two degrees of freedom with coordinates $(r,p_r,\theta,p_\theta)$.

We next introduce the action-angle coordinates and work out the frequencies of the oscillations. 
The action variables $J_r$ and $J_\theta$ are defined as follows (c.f. \eqref{EqHamS})
\begin{equation}\label{EqAAS}
\begin{aligned}
J_r&=\frac{1}{2\pi}\oint p_r dr=\frac1\pi\int_{r^-}^{r^+}\frac{1}{\al}\sqrt{E^2-\al r^{-2} L^2-\al} \,dr,\\
J_\theta&=\frac{1}{2\pi}\oint p_\theta d\theta=\frac{1}{\pi}\int_{\theta^-}^{\theta^+}\sqrt{L^2-\frac{L_z^2}{\sin^2\theta}}\,d\theta ,
\end{aligned}
\end{equation}
where $r^-<r^+$ are zeros of the integrand of $J_r$, and the definition of $J_\theta$ is the same as \eqref{EqJtheta}. The dual angular coordinates $\al_r,\al_\theta\in \T$ are introduced to preserve the symplectic structure, i.e.
$$dp_r\wedge dr+dp_\theta\wedge d\theta=dJ_r\wedge d\al_r+dJ_\theta\wedge d\al_\theta.$$ The motion of the angular coordinates $(\al_r,\al_\theta)$ on $\T^2$ is given by a linear motion with frequency $\omega=(\omega_r,\omega_\theta)=(\frac{\partial H}{\partial J_r},\frac{\partial H}{\partial J_\theta})$ where $H$ is the Schwarzschild Hamiltonian written in action-angle coordinates and with fixed parameters $E$ and $L_z$. We call $\omega_r:=\frac{\partial H}{\partial J_r}$ the radial frequency and $\omega_\theta:=\frac{\partial H}{\partial J_\theta} $ the latitudinal frequency. 

The ratio $\omega_r/\omega_\theta$ can be parametrized by one parameter that we choose to be $L^2$. In order to apply KAM theorem and the twist map theory, we need to find an interval where the ratio has no critical point.  We see from Figure \ref{FigQPOS} (plotted with $E=0.97$ and $L_z=3$) that this is indeed the case.

\begin{Thm}\label{ThmQPOS} Let $\eps h_{\mu\nu} dx^\mu dx^\nu$ be a stationary and axisymmetric perturbation to the Schwarzschild metric. Then there exist an open set of parameters $(E,L_z)$,  a neighborhood $\mathcal D$ of the point $(r,p_r,\theta,p_\theta)=(r_+,0,\pi/2, 0)$ and $\eps_0$ such that KAM and the twist map theory are applicable to the perturbed system in $\mathcal D$ and for all $0<|\eps|<\eps_0$.
\end{Thm}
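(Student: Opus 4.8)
The plan is to reduce the statement to a single scalar non-degeneracy inequality for the unperturbed rotation number, and then to verify that inequality with the help of a closed form for the latitudinal action together with elementary facts about the Schwarzschild radial motion. First I would fix $(E,L_z)$ in the region \eqref{EqBS} and pass to the reduced two-degree-of-freedom Hamiltonian on $(r,p_r,\theta,p_\theta)$, which is integrable with second integral $L^2=p_\theta^2+L_z^2/\sin^2\theta$. On $H^{-1}(-\tfrac12)$ the invariant tori are labelled by $L^2$, and, since for $|L|>|L_{isco}|$ the root $r_+$ of \eqref{Eqrpm} is a nondegenerate minimum of $V$, the equatorial circular orbit $(r_+,0,\pi/2,0)$ is an elliptic--elliptic equilibrium of the reduced flow (radially because $V''(r_+)>0$, latitudinally because $\theta=\pi/2$ minimizes $L_z^2/\sin^2\theta$), and the nearby tori are the small oscillations responsible for the QPO. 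Performing the energetic reduction of Section \ref{SSERed} with $\al_\theta$ as new time -- equivalently taking a Poincar\'e section transverse to the flow near this equilibrium -- yields an area-preserving annulus map whose unperturbed form is \eqref{EqTwist0} with rotation number $\nu=\omega_r/\omega_\theta$ regarded as a function of $L^2$. Both the twist condition of Definition \ref{DefTwist}(2) and Arnold's isoenergetic nondegeneracy condition then amount to the single requirement that $\nu(L^2)$ have no critical point on a subinterval of the $L^2$-range belonging to a neighborhood of $(r_+,0,\pi/2,0)$.

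Next I would make $\nu$ explicit. Exactly as in the proof of Proposition \ref{PropConvexityS}, since $J_\theta$ in \eqref{EqAAS} depends only on $(L^2,L_z)$ while $J_r$ depends only on $(E,L^2)$ at fixed $L_z$, the Jacobian identity for $(J_r,J_\theta)\leftrightarrow(H,L^2)$ is triangular ($\partial_H J_\theta=0$), and inverting it gives $\omega_r=(\partial_H J_r)^{-1}$ and $\omega_\theta=-\partial_{L^2}J_r/(\partial_H J_r\,\partial_{L^2}J_\theta)$, whence $\nu=-\partial_{L^2}J_\theta/\partial_{L^2}J_r$. The substitution $u=\cos\theta$ evaluates the latitudinal action in closed form, $J_\theta=\sqrt{L^2}-|L_z|$, so that $\partial_{L^2}J_\theta=(2\sqrt{L^2})^{-1}$ independently of $L_z$; hence, using \eqref{EqAAS} (the contributions of the moving endpoints $r^\pm$ to $\partial_{L^2}J_r$ cancel because the integrand of $J_r$ vanishes there),
$$\nu(L^2)=\frac{\pi}{\sqrt{L^2}\displaystyle\int_{r^-}^{r^+}\frac{1}{r^2}\left(E^2-\al-\frac{\al L^2}{r^2}\right)^{-1/2}dr}=\frac{2\pi}{\Delta\chi(E,L^2)},$$
where $\Delta\chi(E,L^2)=2\int_{r^-}^{r^+}\sqrt{L^2}\,r^{-2}\big(E^2-\al-\al L^2/r^2\big)^{-1/2}dr$ is the azimuthal angle swept along a Schwarzschild geodesic over one radial period, so that $\Delta\chi-2\pi$ is the periastron advance. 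In particular $\nu$ depends on $(L^2,E)$ alone and is real-analytic there. It remains to see that $\nu(\cdot\,;E)$ is nonconstant: as $L^2$ decreases to the value at which the turning point $r^-$ reaches the unstable radius $r_-$, one has $T_r\to\infty$, hence $\omega_r\to0$ and $\nu\to0$, whereas as $L^2$ increases to the value at which the radial oscillation collapses to the circular orbit, a short computation using \eqref{Eqrpm} gives the classical epicyclic value $\nu\to\sqrt{1-6M/r_+}\in(0,1)$. Thus $\nu$ is a nonconstant real-analytic function of $L^2$, so it has only isolated critical points and is strictly monotone on a subinterval; by continuity the same persists for an open set of $(E,L_z)$, and a numerical evaluation of the displayed integral at $E=0.97,\ L_z=3$ (Figure \ref{FigQPOS}) pins down a concrete admissible interval. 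This step -- deducing non-criticality of $\nu$ although $H$ has no closed form in $(J_r,J_\theta)$ -- is the main obstacle, and what makes it tractable is the closed form $J_\theta=\sqrt{L^2}-|L_z|$ together with the two degenerate limits and analyticity.

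Finally, with the twist / isoenergetic condition established I would take $\mathcal D$ to be the resulting elliptic island around $(r_+,0,\pi/2,0)$ inside $H^{-1}(-\tfrac12)$. For a stationary and axisymmetric perturbation $\eps h_{\mu\nu}dx^\mu dx^\nu$ the quantities $E$ and $L_z$ remain integrals, so the perturbed flow still reduces to two degrees of freedom; the nondegenerate elliptic equilibrium persists by the implicit function theorem, and the twist coefficient stays nonzero for $|\eps|<\eps_0$ because it is an open condition. Moser's twist theorem of Section \ref{SSTwist} then produces the invariant circles with Diophantine rotation numbers, and the Aubry--Mather theory together with items (1)--(5) of Section \ref{SSTwist} (equivalently Theorems \ref{ThmAM} and \ref{ThmMather}) supplies the remaining periodic, heteroclinic and Cantor-type orbits in the Birkhoff instability regions between them. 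This is precisely the asserted applicability of KAM and twist map theory in $\mathcal D$ for all $0<|\eps|<\eps_0$.
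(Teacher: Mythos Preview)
Your argument is correct and follows the same overall route as the paper: reduce to the two-degree-of-freedom $(r,p_r,\theta,p_\theta)$ system with $E,L_z$ fixed, use the triangular Jacobian between $(J_r,J_\theta)$ and $(H,L^2)$ to obtain $\nu=\omega_r/\omega_\theta=-\partial_{L^2}J_\theta/\partial_{L^2}J_r$, and verify that this ratio is noncritical as a function of $L^2$. Where you depart from the paper is in the verification step. The paper simply writes down the two integrals for $\partial_{L^2}J_\theta$ and $\partial_{L^2}J_r$, plots their ratio for $E=0.97,\ L_z=3$ (Figure~\ref{FigQPOS}), observes monotonicity, and appeals to smoothness in $(E,L_z)$ to pass to an open set. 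You instead evaluate $J_\theta=\sqrt{L^2}-|L_z|$ in closed form (so that $\nu$ is manifestly independent of $L_z$), recognize $\nu=2\pi/\Delta\chi$ as the reciprocal of the Schwarzschild periastron advance, and establish nonconstancy analytically via the two limits $\nu\to 0$ at the homoclinic value of $L^2$ and $\nu\to\sqrt{1-6M/r_+}\in(0,1)$ at the stable-circular value, together with real-analyticity in $L^2$. This buys you a proof that does not rest on a numerical picture and explains \emph{why} the twist must hold; the paper's version is shorter but relies on the plot for the key inequality. Both are adequate for the statement as phrased, and both are somewhat informal about locating the monotone subinterval specifically inside a neighborhood of $(r_+,0,\pi/2,0)$.
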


\begin{figure}
\centering
\includegraphics[width=0.4\textwidth]{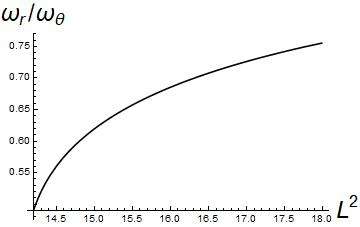}
\caption{The ratio $\omega_r/\omega_\theta$ for Schwarzschild with parameters $E=0.97$ and $L_z=3$}
\label{FigQPOS}
\end{figure}

\begin{proof}
Let $f$ be the map relating the two sets of constants of motion $(J_r,J_\theta)$ and $(H,L^2)$: $f(J_r,J_\theta)=(H,L^2)$, where $H$ is the Schwarzchild Hamiltonian, then we have $Df\cdot Df^{-1}=I$, which is more explicitly
$$\left[\begin{array}{ccc}
\frac{\partial H}{\partial J_r}&\frac{\partial H}{\partial J_\theta}\\
\frac{\partial L^2}{\partial J_r}&\frac{\partial L^2}{\partial J_\theta}
\end{array}\right]\left[\begin{array}{ccc}
\frac{\partial J_r}{\partial H}&\frac{\partial J_r}{\partial L^2}\\
0&\frac{\partial J_\theta}{\partial L^2}
\end{array}\right]=I.$$
Here we use the fact that $J_\theta$ is independent of $H$ as we see from the expression of $J_\theta$.
This implies $Df=\left[\begin{array}{ccc}
\left(\frac{\partial J_r}{\partial H}\right)^{-1}&- \frac{\partial J_r}{\partial L^2}\cdot\left(\frac{\partial J_r}{\partial H}\frac{\partial J_\theta}{\partial L^2}\right)^{-1}\\
0&\left(\frac{\partial J_\theta}{\partial L^2}\right)^{-1}
\end{array}\right].$  Denote by $$\omega=(\omega_r,\omega_\theta)=\left(\left(\frac{\partial J_r}{\partial H}\right)^{-1},- \frac{\partial J_r}{\partial L^2}\cdot\left(\frac{\partial J_r}{\partial H}\frac{\partial J_\theta}{\partial L^2}\right)^{-1}\right)$$ the first row of $Df$. 
Denoting $P(r)=E^2-\al r^{-2} L^2-\al$, we get from \eqref{EqAAS}
\begin{equation*}\label{EqDJ}
\begin{aligned}
\frac{\partial J_r}{\partial L^2}&=-\frac{1}{2\pi}\int_{r^-}^{r^+}\frac{1}{r^2\sqrt{P(r)}}\,dr,\\
\frac{\partial J_\theta}{\partial L^2}&=\frac{1}{2\pi}\int_{\theta^-}^{\theta^+}\frac{1}{\sqrt{L^2-\frac{L_z^2}{\sin^2\theta}}}\,d\theta.
\end{aligned}
\end{equation*}

So we get $$\frac{\omega_r}{\omega_\theta}=-\frac{\partial J_\theta}{\partial L^2}\cdot\left(\frac{\partial J_r}{\partial L^2}\right)^{-1}=\int_{\theta^-}^{\theta^+}\frac{1}{\sqrt{L^2-\frac{L_z^2}{\sin^2\theta}}}\,d\theta\Big/\int_{r^-}^{r^+}\frac{1}{r^2\sqrt{P(r)}}\,dr,$$ which is a function of $L^2$ from the above formulas. Figure \ref{FigQPOS} is plotted using this formula.
Since we have $\frac{\partial L^2}{\partial J_r}=0,\frac{\partial L^2}{\partial J_\theta}=(\frac{\partial J_\theta}{\partial L^2})^{-1}$, we can then eventual compute
$\frac{\partial}{\partial J_\theta}\frac{\omega_r}{\omega_\theta}.$ We see from Figure  \ref{FigQPOS}  that the statement holds for a particularly chosen paramter $(E,L_z)$. Since the action-angle coordinate change is smooth, so we get the statement for the parameter $(E,L_z)$ in an open set. The proof of the statement is now complete.

\end{proof}
\subsubsection{KAM nondegeneracy and further derivatives}\label{SSSDerivatives}
The preceding Theorem \ref{ThmQPOS} uses Moser's version of KAM theorem. If we would like to verify the KAM nondegeneracy conditions in Section \ref{SSKAM}, we have to calculate $\nabla \omega$ where $\nabla $ is the derivative with respect to $J_r,J_\theta$.  In this section, for the sake of completeness, we  show how to take further derivative of $\omega$ and clarify some subtleties. Since we do not use it in the proof of the main theorem, readers are suggested to skip this section when first reading.

We have
\begin{equation}
\begin{aligned}\nabla\omega&=\frac{\partial(E^2,L^2)}{\partial(J_r,J_\theta)}(\partial_{E^2},\partial_{L^2})\omega\\
&=Df \cdot\left[\begin{array}{ccc}
\partial_{E^2}\left(\frac{\partial J_r}{\partial E^2}\right)^{-1}&\partial_{L^2}\left(\frac{\partial J_r}{\partial E^2}\right)^{-1}\\
-\partial_{E^2}\left(\frac{\partial J_r}{\partial L^2}\cdot\left(\frac{\partial J_r}{\partial E^2}\frac{\partial J_\theta}{\partial L^2}\right)^{-1}\right)&- \partial_{L^2}\left(\frac{\partial J_r}{\partial L^2}\cdot\left(\frac{\partial J_r}{\partial E^2}\frac{\partial J_\theta}{\partial L^2}\right)^{-1}\right)
\end{array}\right]\\
&=Df \cdot  \left(\frac{\partial J_r}{\partial E^2}\right)^{-3}\left(\frac{\partial J_\theta}{\partial L^2}\right)^{-1}\left[\begin{array}{ll}
 \frac{\partial^2 J_r}{(\partial E^2)^2}& \frac{\partial^2 J_r}{\partial L^2\partial E^2}\\
 \frac{\partial^2 J_r}{\partial E^2\partial L^2}&\frac{\partial^2 J_r}{(\partial L^2)^2}-\frac{\partial^2 J_\theta}{(\partial L^2)^2}\frac{\partial J_r}{\partial L^2}\cdot \left(\frac{\partial J_\theta}{\partial L^2}\right)^{-1}
\end{array}\right].
\end{aligned}\end{equation}
The expressions in the last matrix are all computable from the definitions of $J_r,J_\theta$.

To obtain the second order derivatives, we cannot differentiate the first order derivatives in \eqref{EqDJ} for the technical reason that further derivative will introduce divergent improper integrals.  This problem can be resolved as follows. We consider only the case $J_r$ and the case $J_\theta$ is similar. Note that the singularity originates from the fact that $P(r^\pm)=0$ will appear in the denominator, which we should avoid. The action $J_r$ is the area enclosed by the curve Graph$p_r$. Let $r_*$ be the minimum of $\al r^{-2} L^2+2\al$. Then we split the integral $J_r=\int_{r^-}^{r^+}\frac{1}{\pi\al}\sqrt{P(r)}\,dr$ into two pieces $J_r^1=\int_{r^-}^{r_*}\frac{1}{\pi\al}\sqrt{P(r)}\,dr$ and $J_r^2=\int_{r_*}^{r^+}\frac{1}{\pi\al}\sqrt{P(r)}\,dr$ and consider only $J_r^1$ since the other one is completely similar. We next pick $\tilde r\in (r^-, r_*)$ sufficiently close to $r^-$ and denote by $\tilde p$ the positive value of $p_r(\tilde r)$. We denote $ J_r^{1,R}=\int_{\tilde r}^{r_*}\frac{1}{\pi\al}\sqrt{P(r)}\,dr$ the area enclosed by $r=\tilde r, r=r_*$ and Graph$p_r$. For the area to the left of the line $r=\tilde r$, we realize the Graph$p_r$ to the left of $r=\tilde r$ as a graph over the $p_r$-axis. So we need to solve the function $\al^2 p_r^2=P(r(p_r))$ for $r$ as a function of $p_r$. This function is solvable if $\tilde r$ is sufficiently close to $r_-$ since $P'(r^-)\neq 0$ hence the implicit function theorem applies. With this function $r(p_r)$, we introduce the integral $J^{1,L}_r=\frac{1}{\pi}\int_{-\tilde p}^{\tilde p}r(p_r) dp_r$. The overlap of $J^{1,L}_r$ and $J^{1,R}_r$ is a square of area $2\tilde p|r_*-\tilde r|. $ Therefore we obtain
$$J^{1}_r=J^{1,R}_r+J^{1,L}_r-2\tilde p|r_*-\tilde r|=\int_{\tilde r}^{r_*}\frac{1}{\pi\al}\sqrt{P(r)}\,dr+\frac{1}{\pi}\int_{-\tilde p}^{\tilde p}r(p_r) dp_r-2\tilde p|r_*-\tilde r|.$$
We remark that $\tilde p$ and $r(p_r)$ are both dependent on the variables $E^2$ and $L^2$ while $\tilde r$ is not.  The derivatives of this expression does not involve improper integrals hence we can take its second derivative to yield an expression for $D\omega$.

\subsubsection{Three fundamental frequencies}\label{SSS3Freq}
In Theorem \ref{ThmQPOS} we consider stationary and axisymmetric perturbations so we have only two fundamental frequencies $\omega_r$ and $\omega_\theta$ to consider. In general, when a perturbation depends on $\varphi$, i.e. nonaxisymmetric, but not on $\tau$, we have to treat the Hamiltonian system with as one three degrees of freedom, in which case, we cannot  apply Moser's theorem or twist map theory, so we have to verify the KAM nondegeneracy assumptions in Section \ref{SSKAM} if we wish to obtain KAM type dynamics. Again readers are suggested to skip this section when first reading, since here we only show how to get the formulas for $\nabla \omega$, where $\nabla$ means the derivative with respect to the action variables $J_r,J_\theta,J_\varphi$. We introduce the action variables $J_r,J_\theta$ as before and the extra $J_\varphi$ is chosen to be $L_z$.

Then there is a map $F$ relating the two set of conserved quantities: $F(J_r,J_\theta,J_\varphi)=(H, L^2,L_z)$. We have $DF\cdot DF^{-1}=I$. More explicitly we have
$$\left[\begin{array}{ccc}
\frac{\partial H}{\partial J_r}&\frac{\partial H}{\partial J_\theta}&\frac{\partial H}{\partial J_\varphi}\\
\frac{\partial L^2}{\partial J_r}&\frac{\partial L^2}{\partial J_\theta}&\frac{\partial L^2}{\partial J_\varphi}\\
0&0&1
\end{array}\right]\left[\begin{array}{ccc}
\frac{\partial J_r}{\partial H}&\frac{\partial J_r}{\partial L^2}&0\\
0&\frac{\partial J_\theta}{\partial L^2}&\frac{\partial J_\theta}{\partial L_z}\\
0&0&1
\end{array}\right]=I.$$
This implies $DF=\left[\begin{array}{ccc}
a^{-1}&-b/(ac)&bd/(ac)\\
0&c^{-1}&-d/c\\
0&0&1
\end{array}\right]$, where $a=\frac{\partial J_r}{\partial H},\ b=\frac{\partial J_r}{\partial L^2},\ c=\frac{\partial J_\theta}{\partial L^2},\ d=\frac{\partial J_\theta}{\partial L_z}$ can be evaluated directly from equation \ref{EqAAS}.

Denoting by $\omega=\omega(J_r,J_\theta,J_\varphi)$ the first row of $DF$. We thus have
$$\nabla\omega=\frac{\partial(H,L^2,L_z)}{\partial(J_r,J_\theta,J_\varphi)}(\partial_H,\partial_{L^2},\partial_{L_z})\omega=DF \cdot\left[\begin{array}{ccc}
\partial_H(a^{-1})&\partial_{L^2}(a^{-1})&0\\
\partial_H(-b/(ac))&\partial_{L^2}(-b/(ac))&\partial_{L_z}(-b/(ac))\\
\partial_H(bd/(ac))&\partial_{L^2}(bd/(ac))&\partial_{L_z}(bd/(ac))\\
\end{array}\right]$$
Since $DF$ is nondegenerate, the nondegeneracy condition is then reduced to that of $(\partial_E,\partial_{L^2},\partial_{L_z})\omega$.

\section{The Kerr spacetime}\label{SSK}
In this section, we study the dynamics of perturbed Kerr spacetime.
The Kerr spacetime in the standard Boyer-Lindquist coordinates has the form
$$ds^2=-\left( 1-\frac{2Mr}{\Sigma}\right)d\tau^2-2\left( \frac{2Mr}{\Sigma}\right) a\sin^2\theta d\tau d\varphi+\Sigma \left( \frac{dr^2}{\Delta}+d\theta^2\right) +\frac{\mathcal A}{\Sigma}\sin^2\theta d\varphi^2,$$
where
\begin{equation}
\begin{aligned}
\Sigma&=r^2+a^2\cos^2\theta,\\
\Delta&=r^2+a^2-2Mr,\\
\mathcal A&=(r^2+a^2)^2-\Delta a^2\sin^2\theta.
\end{aligned}
\end{equation}
Here $M$ is the mass and $a$ is the angular momentum of the blackhole with $0\leq|a|\leq M$. When $a=0$, the Kerr metric reduces to the Schwarzschild metric.
The event horizon is where the metric coefficient $g_{rr}=\frac{\Sigma}{\Delta}$ becomes singular. Let $r_h=M+ \sqrt{M^2-a^2}$ be the larger root of $\Delta$, hence the outer event horizon is the sphere $r=r_h$.


 The geodesic equations are of the following form (assuming $ds^2=-\mu^2$).
\begin{equation}\label{EqofMotion}
\begin{cases}
\Sigma^2(\frac{dr}{dt})^2&= R,\\
\Sigma^2(\frac{d\theta}{dt})^2&=\Theta,\\
\Sigma\frac{d\tau}{dt}&=-a(aE\sin^2\theta-L_z)+\frac{r^2+a^2}{\Delta}P(r),\\
\Sigma\frac{d\varphi}{dt}&=\frac{a}{\Delta}P(r)-\left(aE-\frac{L_z}{\sin^2\theta}\right),
\end{cases}
\end{equation}
where we have
\begin{equation}\label{EqRTheta}
\begin{aligned}
&R(r)= P(r)^2-\Delta(\mu^2 r^2+(L_z-aE)^2+Q),\\
&\Theta(\theta)=Q-\left[(\mu^2-E^2)a^2+\dfrac{L_z^2}{\sin^2\theta}\right]\cos^2\theta,\\
&P(r)=E(r^2+a^2)-aL_z,
\end{aligned}
\end{equation}
and the constant $Q=p_\theta^2+\cos^2\theta\left(a^2(\mu^2-E^2)+\left(\frac{L_z}{\sin\theta}\right)^2\right)$ is a constant of motion called \emph{Carter constant}.

We next introduce the Hamiltonian formalism. We treat the Kerr metric as twice a Lagrangian and get the corresponding Hamiltonian via the formal Legendre transform $H(p_{x^\mu},x^\mu)=p_{x^\mu}\frac{d x^\mu}{dt}-L(x^\mu,\frac{dx^\mu}{dt})$ where $x^\mu=(\tau, r,\theta,\varphi)$ and $p_{x^\mu}=(p_\tau,p_r, p_\theta,p_\varphi)$ with
\begin{equation}\label{EqLegendreK}
\begin{cases}
p_\tau&=E=-\left( 1-\frac{2Mr}{\Sigma}\right)\dot\tau-\left( \frac{2Mr}{\Sigma}\right) a\sin^2\theta  \dot\varphi,\\
p_r&=\Sigma \frac{\dot r}{\Delta},\\
 p_\theta&=\Sigma\dot \theta,\\
 p_\varphi&=L_z=-\left( \frac{2Mr}{\Sigma}\right) a\sin^2\theta \dot\tau +\frac{\mathcal A}{\Sigma}\sin^2\theta \dot \varphi.
 \end{cases}
 \end{equation}
 The Hamiltonian $H$  can be written as
\begin{equation}\label{EqHam}
\begin{aligned}
& H=\frac12\Sigma^{-1}(H_r+H_\theta),\\ 
& H_r=\Delta p_r^2-\dfrac{1}{\Delta}((r^2+a^2)E-aL_z)^2,\\
& H_\theta=p_\theta^2+\dfrac{1}{\sin^2\theta}(a\sin^2\theta E-L_z)^2.\\
\end{aligned}
\end{equation}
The system has four independent constants of motion $H,E,L_z,Q$. 

\subsection{The vertical dynamics}\label{SSSVertical}
Introducing a time reparametrization $d\lambda=\Sigma(r,\theta)^{-1}dt$ called Mino time, we determine the vertical dynamics  by the equation \begin{equation}\label{EqVerticalK}(\frac{d\theta}{d\lambda})^2=\Theta=Q-U(\theta),\ \mathrm{where}\ U(\theta)=\left[(\mu^2-E^2)a^2+\dfrac{L_z^2}{\sin^2\theta}\right]\cos^2\theta.\end{equation} So the dynamics can be visualized as a particle moving in the potential well of $U(\theta)$.

The case of $L_z=0$ differs drastically from $L_z\neq 0$ case. Indeed, if $L_z=0$ and $Q>(\mu^2-E)a^2$, the orbit goes through the south and north poles,  while $L_z\neq0$, as will will see in the following, the orbit is bounded away from the the south and north poles. 
\begin{figure}
\centering
\includegraphics[width=0.3\textwidth]{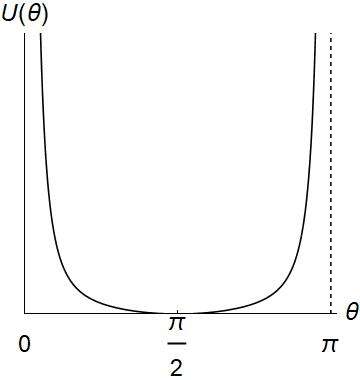}
\caption{Graph of $U$}
\label{FigU}
\end{figure}

We first consider the case $\mu^2=1$ and suppose  $(\mu^2-E^2)>0$, then we have $U(\theta)\geq 0$ is an infinitely deep single well potential and $Q\geq 0$ (see Figure \ref{FigU}). The minimum $\min U=0$ is attained at $\theta=\pi/2$. We also have $U(\pi/2+\theta)=U(\pi/2-\theta)$ for $\theta\in [0,\pi/2)$ and $U(\theta)\to\infty$ for $\theta\to 0,\pi.$ Let $\theta^-<\theta^+$ be two root of the equation $\Theta(\theta)=0$ with $Q>0$. Then the orbit on the photon sphere with conserved quantities $Q,E,L_z$ oscillates within the spherical strip $\theta\in [\theta^-,\theta^+]$. The explicit solution of $\theta(\lambda)$ is obtained as follows. Denoting by $$\ell=L_z/E,\ q=Q/E^2,\ v=\cos\theta,\ \al^2=a^2(E^{-2}-1)>0$$ and $$\Theta_v=q-(\ell^2+q+\al^2)v^2+\al^2 v^4,$$
and by $v^2_\pm$ the two roots of the equation $\Theta_v=0$ with $v_-^2<v_+^2$.  Then we get
$$\lambda-\nu=\int_0^\theta \frac{d\theta}{\sqrt\Theta}=E^{-1}\int_0^v \frac{dv}{\sqrt\Theta_v}=\frac{1}{E\al}\int_0^v\frac{dv}{\sqrt{(v^2-v_+^2)(v^2-v_-^2)}}=\frac{1}{E\al v_+} F(\psi,k),$$
where $\nu$ is the integral constant, $k=\frac{v_-}{v_+}$ and $\sin \psi=\frac{v}{v_-}$, and $F$ is the incomplete elliptic integral of the first kind. This gives us the function of $\lambda$ in terms of $v$ hence of $\theta$, then inverting it we obtain the function $\theta(\lambda)$.

We next consider the  case  of $\mu^2=0$.
\begin{Lm}
\begin{enumerate}
\item When $a^2E^2\leq L_z^2$, the potential $U$ is a single well with a global minimum value at $\theta=\pi/2$;
\item when $a^2E^2>L_z^2$, the potential $U$ is a double well with a local max value 0 at $\theta=\pi/2$.
\end{enumerate}
\end{Lm}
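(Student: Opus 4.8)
The plan is to remove the trigonometry by the substitution $v=\cos\theta$, then $w=v^2=\cos^2\theta\in[0,1)$. With $\mu^2=0$ the potential is $U(\theta)=\bigl(L_z^2/\sin^2\theta-a^2E^2\bigr)\cos^2\theta$, which depends on $\theta$ only through $\cos^2\theta$; abbreviating $A:=a^2E^2\ge 0$ and $B:=L_z^2\ge 0$ we get $U=g(w)$ with
\[
g(w)\;=\;\frac{Aw^2+(B-A)w}{1-w}\;=\;\frac{w\bigl(B-A(1-w)\bigr)}{1-w}.
\]
First I would record the easy structural facts: $g(0)=0$, so $U(\pi/2)=0$; $g(w)\to+\infty$ as $w\to1^-$ whenever $B>0$, so $U\to+\infty$ at the poles; and $\theta\mapsto w=\cos^2\theta$ is smooth, strictly decreasing on $(0,\pi/2]$ and strictly increasing on $[\pi/2,\pi)$, two-to-one, with $\theta=\pi/2$ the unique preimage of $w=0$, so in particular $U(\pi/2-s)=U(\pi/2+s)$ and $\theta=\pi/2$ is automatically a critical point of $U$. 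Hence the shape of $U$ on $(0,\pi)$ is completely governed by the monotonicity type of $g$ on $[0,1)$.

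The one genuine computation is the sign of $g'$. Differentiating the quotient and simplifying the numerator (completing the square in $w$) I expect it to collapse to
\[
g'(w)\;=\;\frac{B-A(1-w)^2}{(1-w)^2},
\]
so that on $[0,1)$ the sign of $g'(w)$ is the sign of $B-A(1-w)^2$. Since $(1-w)^2$ decreases strictly from $1$ to $0$ as $w$ runs over $[0,1)$, the numerator $B-A(1-w)^2$ is strictly increasing, equal to $B-A$ at $w=0$ and tending to $B\ge 0$ as $w\to1^-$.

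From this the two regimes follow. If $a^2E^2\le L_z^2$, i.e. $A\le B$, then $B-A(1-w)^2\ge B-A\ge 0$ on $[0,1)$, so $g$ is nondecreasing (strictly increasing on $(0,1)$ unless $A=B=0$); translating back, $U$ strictly decreases on $(0,\pi/2)$ and strictly increases on $(\pi/2,\pi)$, a single well with global minimum $U(\pi/2)=0$. If $a^2E^2>L_z^2$, i.e. $A>B$, then $B-A(1-w)^2$ has exactly one sign change, from negative to positive, at $w_\star=1-\sqrt{B/A}=1-|L_z|/|aE|\in(0,1)$, so $g$ strictly decreases on $(0,w_\star)$ and strictly increases on $(w_\star,1)$; this gives a local maximum of $U$ at $\theta=\pi/2$ with value $0$, two symmetric minima at the angles $\theta_\star,\pi-\theta_\star$ with $\cos^2\theta_\star=w_\star$ and $U(\theta_\star)<0$, and $U\to+\infty$ at the poles, i.e. a double well. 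I would close by checking the degenerate line $L_z=0$ (which, as long as $aE\ne0$, falls in case (2)) directly: there $g(w)=-Aw$ is strictly decreasing, confirming the double-well picture with the two well bottoms degenerating onto the poles. The only place requiring care is the algebraic simplification of $g'$ to the displayed form—a sign slip there would swap the two cases—so I would cross-check it by comparing $g'(0)$ from the formula with the direct value $(2A\cdot0+B-A)/(1-0)=B-A$.
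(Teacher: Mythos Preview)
Your argument is correct. The substitution $w=\cos^2\theta$ and the closed form $g'(w)=\bigl(B-A(1-w)^2\bigr)/(1-w)^2$ are right (your self-check $g'(0)=B-A$ confirms this), and the sign analysis cleanly yields the single-well versus double-well dichotomy, including the explicit location $\cos^2\theta_\star=1-|L_z|/|aE|$ of the two minima in case~(2).

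The paper takes a shorter but coarser route: for $a^2E^2\le L_z^2$ it simply observes that $L_z^2/\sin^2\theta\ge L_z^2\ge a^2E^2$ forces $U\ge 0$ with equality only at $\theta=\pi/2$; for $a^2E^2>L_z^2$ it computes $U''(\pi/2)$ directly to see that $\pi/2$ is a local maximum with $U(\pi/2)=0$, so $U$ dips below zero on both sides and hence is a double well. Your approach trades that local second-derivative test for a global monotonicity statement about $g$ on $[0,1)$, which is slightly more work but gives strictly more: you recover the full profile of $U$ (strict monotonicity on each subinterval, the exact well bottoms, and the degenerate $L_z=0$ limit) rather than just the nature of the critical point at the equator. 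Either argument suffices for the lemma as stated; yours would be the one to keep if later one needed quantitative control on the well depth or on $\theta_\star$.
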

\begin{proof}
When $a^2E^2<L_z^2$ we have $U\geq 0$ with a global mininum value 0 at $\theta=\pi/2$. So the situation is similar to the above $\mu^2=1$ and  $E^2<1$ case. It is necessary that $Q\geq 0$ and when $Q>0$ all orbits crosses the equator plane. On the other hand, we have that
$$U''(\pi/2)=4(L_z^2-E^2a^2).$$
This shows that when $L_z^2<E^2a^2$, the point $\theta=\pi/2$ is a local max of $U$. Since we have $U(\pi/2)=0$. This implies that the potential $U$ is negative in some region between $0$ and $\pi/2$, and between $\pi/2$ and $\pi$. Hence $U$ is a double well potential. In this case, it is possible to have $Q<0$, in  which case each orbit is confined within one well bounded away from the equator plane. When $Q>0$, the particle can move from one well to the other crossing the equator plane. The $Q=0$ case is critical, in which case the point $\theta=\pi/2$ corresponds to a hyperbolic fixed point and the orbit moving in one well approaches the fixed point in both future and pass hence is on a homoclinic orbit.
\end{proof}

 In this critical case $Q=0$ and the double well case, using the general principle of \cite{X}, in a generically periodically perturbed Kerr metric, separatix splitting will be created and chaotic motion will occur, which implies that there exists orbit tunneling from one well to the other.   In each of the above cases, solutions $\theta(\lambda)$ of the equation of motion can be found explicitly (c.f. Section 63 of \cite{C}).

\subsection{The photon shell and the radial dynamics}

 Similar to the photon sphere in Schwarzschild case, the Kerr spacetime also admits unstable spherical orbits. However, the dynamics of these orbits in the Kerr spacetime differs drastrically from the Schwarzschild case. Such an orbit in the Schwarzschild case lies on a fixed plane through the origin thus is periodic and all these orbits form a sphere. However, in the Kerr case, such an orbit is no longer  confined to a plane but may oscillate in a spherical strip and may be quasiperiodic. Moreover, depending on the parameters $E,L_z, Q$, the radius varies, so the union of all such orbits forms a ring called \emph{photon shell}.
 The photon shell for Kerr was first studied by \cite{Wi} in the extreme case $a=M$ for timelike geodesics. The lightlike case was studied by \cite{T}. The equatorial case of $Q=0$ was studied by \cite{GLP}. 




\subsubsection{Determining the photon shell}
An orbit in the photon shell have constant radial component. So from the radial equation of \eqref{EqofMotion}, the following equations should be satisfied.
\begin{equation}\label{EqCrit}R(r)=0,\quad \frac{d}{dr}R(r)=0. \end{equation}
Let $r_c$ be a solution. Moreover, orbits in the photon shell are unstable, which implies the second order derivative $\frac{d^2}{dr^2}R(r_c)>0$.

Introducing $\ell=\frac{L_z}{E}$ and $q=\frac{Q}{E^2}$, we write explicitly equations \eqref{EqCrit} as
\begin{equation*}
\begin{aligned}
& r^4+(a^2-\ell^2-q)r^2+2M(q+(\ell-a)^2)r-a^2q=0,\\
& 4r^3+2(a^2-\ell^2-q)r+2M(q+(\ell-a)^2)=0
\end{aligned}
\end{equation*}
in the $\mu^2=0$ case and
\begin{equation*}
\begin{aligned}
& 3r^4+r^2a^2-q(r^2-a^2)-\frac{r^2}{E^2}(3r^2-4Mr+a^2)=r^2\ell^2,\\
& r^4-a^2Mr+q(a^2-Mr)-\frac{r^3}{E^2} (r-M)=rM(\ell^2-2a\ell)
\end{aligned}
\end{equation*}
in the $\mu^2=1$ case.

Each pair of equations involves four variables $r,E,\ell,q$. We solve for $\ell$ and $q$
\begin{equation}\label{Eqellq0}
\begin{aligned}
\ell&=\frac{1}{a(r-M)}(M(r^2-a^2)-r\Delta),\\
q &=-\frac{r^3}{a^2(r-M)^2}(4M\Delta-r(r-M)^2)
\end{aligned}
\end{equation}

in the $\mu^2=0$ case and
\begin{equation}\label{Eqellq}
\begin{aligned}
\ell&=\frac{1}{a(r-M)}\left( M(r^2-a^2)- r\Delta \left( 1-\frac{1}{E^2}(1-\frac{M}{r})\right)^{1/2}\right),\\
q &=\frac{1}{a^2(r-M)}\left[\frac{r^3}{r-M}(4a^2M-r(r-3M)^2)+\frac{r^2}{E^2}\left[r(r-2M)^2-a^2M\right]\right.\\
&\left.-\frac{2r^3M}{r-M}\Delta\left(1- \left(1-\frac{1}{E^2}(1-\frac{M}{r})\right)^{1/2}\right)\right]
\end{aligned}
\end{equation}
in the $\mu^2=1$ case. The formulas can be found in Section 63 and 64 of \cite{C} and we have discarded an unphysical solution in each case. 

The functions $\ell$ and $q$ in  \eqref{Eqellq} are plotted in Figure \ref{fig:side:aChandra} with data $M=1, a=0.8, E=1$ and Figure \ref{fig:side:bChandra} with $M=1, a=0.8, E=0.92$ in the case of $\mu^2=1$.  For nearby choices of parameters the picture is qualitatively similar since the functions are continuous in $E^2$ and $a$. In the massless case, the RHS of \eqref{Eqellq0} has no dependence on $E$. The picture of $\ell$ and $q$ in \eqref{Eqellq0} is qualitatively similar to Figure \ref{fig:side:aChandra} for $a=0.8$. 

\begin{figure}
\begin{minipage}[t]{0.5\linewidth}
\centering
\includegraphics[width=2.in]{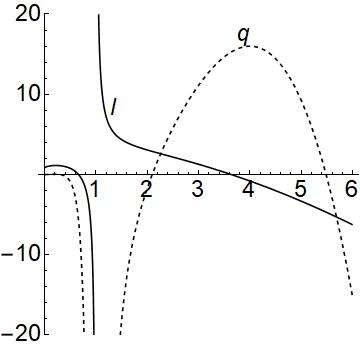}
\subcaption{$M=1,a=0.8,E=1$}
\label{fig:side:aChandra}
\end{minipage}%
\begin{minipage}[t]{0.5\linewidth}
\centering
\includegraphics[width=2.in]{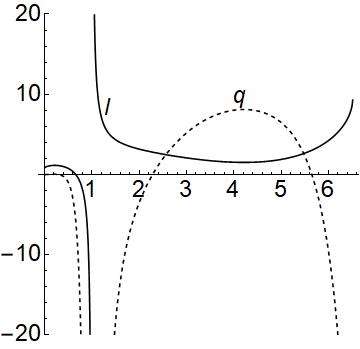}
\subcaption{$M=1,a=0.8,E=0.92$}
\label{fig:side:bChandra}
\end{minipage}
\caption{Graphs of $\ell$ and $q$ in the $\mu^2=1$ case}
\end{figure}


Given $r=r_c$, it determines the values of $\ell=\ell_c,q=q_c$ such that $r_c$ solves equation \eqref{EqCrit} with parameters $\ell_c$ and $q_c$. So for this choice of parameters $\ell_c, q_c$, the radius $r_c$ is a double root of $R(r)=0$, thus $R$ can be put in the form
\begin{equation}\label{EqRFactor}
R(r)/E^2=(r-r_c)^2\left(Ar^2 +2rB+C\right),\end{equation}
where we define $A=1-1/E^2<0, B=r_c\left( 1-\frac{1}{E^2}\left(1-\frac{M}{r_c}\right)\right), C=-\frac{a^2}{r_c^2} q_c$ in the $\mu^2=1$ case,  and $A=1, B=r_c,C=-a^2q_c/r_c^2$ in the $\mu^2=0$ case.

We next introduce the following admissible set of radii of bound spherical orbits.

\begin{Def}\label{DefAdmE}
\begin{enumerate}
\item In the case of $\mu^2=1$, let $M>0,\ |a|\in (0,M]$ and $E\in (1-Mr_h^{-1},1)$ be given parameters. We define the admissible set $\mathcal R:=\mathcal R(M,a,E)$ as the radius $r_c$ satisfying the following:
\begin{enumerate}
\item $r_h<r_c\leq \frac{M}{1-E^2}$ so that $r_c$ lies outside the event horizon and the square root in \eqref{Eqellq} makes sense;
\item $q(r_c)\geq 0;$
\item inequality $Ar_c^2+2Br_c+C>0$ holds, so that $r_c$ is a local max for $-R$.
\end{enumerate}
\item In the case of $\mu^2=0$, the admissible set $\mathcal R:=\mathcal R(M,a)$ radii of bound spherical orbits is defined as the set of  $r_c$ satisfying $q(r_c)\geq 0$ in \eqref{Eqellq0}.
\end{enumerate}
\end{Def}

\subsubsection{The Hamiltonian restricted to the photon shell}
When restricted to the photon shell \begin{equation}
\label{EqNHIMK}N:=\{r=r_c,p_r=0,\ r_c\in \mathcal R\}
\end{equation}
and fixing an admissible energy $E$, the Hamiltonian system becomes the Hamiltonian subsystem
\begin{equation}\label{EqHamPhotonK}
\begin{aligned}
& H_N=\Sigma(r_c,\theta)^{-1}\left(-\dfrac{((r_c^2+a^2)E-aL_z)^2}{\Delta(r_c)}+p_\theta^2+\dfrac{1}{\sin^2\theta}(a\sin^2\theta E-L_z)^2\right),\\
\end{aligned}
\end{equation}
 obtained from the Hamiltonian \eqref{EqHam} by setting $r=r_c, p_r=0$. We should further eliminate the $r_c$-dependence by substituting $r_c$ as a function of $L_z$ by inverting the first equation in \eqref{Eqellq0} or \eqref{Eqellq} on each of its monotone intervals.  Thus we get a Hamiltonian system of two degrees of freedom with coordinates $(\theta, p_\theta, \varphi, L_z)$ integrable in the Liouville-Arnold sense.

\subsubsection{The radial  dynamics}
We look at the radial motion in \eqref{EqofMotion}. After a time rescaling $d\lambda:=\frac{1}{\Sigma} d\tau$ called the Mino time, the radial equation of motion becomes $\left(\frac{dr}{d\lambda}\right)^2- R(r)=0$. Thus  the dynamics  can be visualized as a particle moving in the potential well of $-R(r)$ on the zero energy level.

\begin{figure}
\begin{minipage}[t]{0.5\linewidth}
\centering
\includegraphics[width=1.8in]{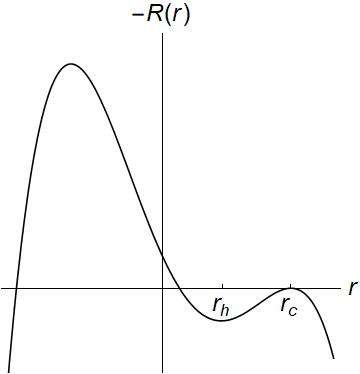}
\subcaption{$\mu^2=0$}
\label{FigR0}
\end{minipage}%
\begin{minipage}[t]{0.5\linewidth}
\centering
\includegraphics[width=1.8in]{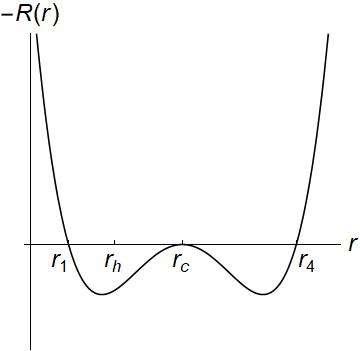}
\subcaption{$\mu^2=1$}
\label{FigR}
\end{minipage}
\caption{Graph of $-R$}
\end{figure}

\begin{Lm}In the case of $\mu^2=0$, the only bound orbits outside the event horizon are on the photon shell, i.e. we have $r=r_c$ along each such  orbit and $-R''(r_c)<0$.
\end{Lm}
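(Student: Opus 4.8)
The plan is to analyze the quartic polynomial $R(r)$ directly, using the explicit factorization \eqref{EqRFactor}. In the massless case $\mu^2=0$ we have $R(r)/E^2=(r-r_c)^2(r^2+2r_cr+C)$ with $C=-a^2q_c/r_c^2$. Since $\mu^2=0$, the leading behavior of $R(r)$ as $r\to+\infty$ is $+E^2r^4\to+\infty$, so $-R(r)\to-\infty$; on the other hand the event horizon occurs at the larger root $r_h$ of $\Delta$, and one checks from $R(r)=P(r)^2-\Delta(\cdots)$ that $R(r_h)=P(r_h)^2\ge 0$. A bound orbit in the potential picture $(\frac{dr}{d\lambda})^2=R(r)$ requires $R\ge 0$ on an interval bounded away from $r_h$ on which $-R$ has a genuine well, i.e. a local minimum of $-R$ (equivalently a local maximum of $R$ lying in the region $R\ge 0$, with the orbit trapped between two turning points where $R=0$, or else the degenerate spherical orbit itself). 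So the goal reduces to showing that the only place $R$ can be nonnegative and admit such a trapped interval outside $r_h$ is at the double root $r=r_c$, and that there $R$ has a local maximum, i.e. $R''(r_c)<0$, equivalently $-R''(r_c)<0$.

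The key steps, in order: (1) Establish the sign pattern of $R$ on $(r_h,\infty)$. Because $r_c\in\mathcal R$ is a double root of $R$, the remaining roots of the quartic are the roots of $r^2+2r_cr+C$. Determine their location relative to $r_h$ and $r_c$ using $C=-a^2q_c/r_c^2\le 0$ (since $q_c=q(r_c)\ge 0$ for $r_c\in\mathcal R$ by Definition \ref{DefAdmE}(2)): when $C<0$ this quadratic has one negative root and one positive root, and when $C=0$ it has roots $0$ and $-2r_c$, both $\le 0$; hence at most one root of $R$ other than $r_c$ lies in $(r_h,\infty)$, and that root — call it $r_1$ — satisfies $r_1^2+2r_cr_1+C=0$ with $r_1>0$. (2) Using $R(r)/E^2=(r-r_c)^2(r-r_1)(r-r_2)$ with $r_2=-C/r_1-2r_c-\ldots\le 0$, conclude that for $r>r_1$ we have $R(r)>0$ unless $r=r_c$, while for $r_h<r<r_1$ we have $R(r)<0$ except at $r=r_c$. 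In particular no turning-point interval $[r^-,r^+]$ with $r^-<r^+$ and $R\ge 0$ on it and $R(r^\pm)=0$ can exist outside $r_h$: the only way to have $R\ge0$ on a nondegenerate interval is to be on the unbounded branch $r>r_1$, which corresponds to a scattering (escaping or infalling) orbit, not a bound one. Thus every bound orbit outside the horizon must have $r\equiv r_c$, i.e. lie on the photon shell. (3) Compute $-R''(r_c)$: differentiating \eqref{EqRFactor} twice at the double root gives $R''(r_c)/E^2=2(Ar_c^2+2Br_c+C)=2(r_c^2+2r_c^2+C)=2(3r_c^2-a^2q_c/r_c^2)$, wait — more precisely with $\mu^2=0$ coefficients $A=1,B=r_c,C=-a^2q_c/r_c^2$ one gets $R''(r_c)=2E^2(r_c^2+2r_cB+C)\cdot(\text{check via product rule})$; the point is that admissibility forces this to be positive, hence $-R''(r_c)<0$, so $r_c$ is indeed a local maximum of $R$ (local minimum of $-R$), confirming the last assertion.

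The main obstacle I anticipate is step (1)–(2): carefully pinning down the ordering of $r_h$, $r_c$, and the extra root $r_1$ on the real line, and ruling out the seemingly problematic configuration where $R\ge0$ on an interval $[r^-,r^+]$ with $r_h<r^-<r^+<r_c$ or $r_c<r^-<r^+<r_1$ — i.e. a genuine potential well distinct from the spherical orbit. This cannot happen for a quartic with a double root at $r_c$ and only one further positive root, since between consecutive distinct real roots the sign is constant; but one must verify that $r_c\ne r_1$ generically and handle the coincidence $r_c=r_1$ (which would make $r_c$ a triple root — an inflection, not a bound configuration) separately, and one must confirm $r_1>r_h$ or $r_1<r_h$ as the case may be using the explicit expressions \eqref{Eqellq0}. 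Once the root structure is settled the rest is bookkeeping; the computation of $-R''(r_c)<0$ is immediate from \eqref{EqRFactor} together with condition (c) of Definition \ref{DefAdmE} (which in the $\mu^2=0$ case is automatic from $q_c\ge0$ and the form of $q$), so I would present it as a one-line consequence rather than a separate argument.
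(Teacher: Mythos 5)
Your overall strategy is the same as the paper's: factor $R/E^2=(r-r_c)^2(Ar^2+2Br+C)$, use $q_c\ge 0$ to force $AC\le 0$ so that the quadratic factor has at most one positive root, and reduce everything to showing that this quadratic is positive at $r=r_c$, which simultaneously places the extra root below $r_c$ and gives $-R''(r_c)=-2E^2(Ar_c^2+2Br_c+C)<0$. Up to that point your root-counting and sign bookkeeping match the paper's argument.

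The gap is in the last step, which you dismiss as ``automatic from $q_c\ge 0$ and the form of $q$'' and propose to present as a one-line consequence. It is not automatic, and the logic is in fact backwards: the needed inequality $Ar_c^2+2Br_c+C>0$ reads $3r_c^2>a^2q_c/r_c^2$, i.e.\ it is an \emph{upper} bound on $q_c$, so the admissibility condition $q_c\ge0$ (a lower bound) cannot imply it. One genuinely has to substitute the explicit expression \eqref{Eqellq0} for $q(r_c)$; the paper does this and reduces the claim to the polynomial inequality $r_c(r_c-M)^2>M\Delta(r_c)$, equivalently $r_c(r_c^2-3Mr_c+3M^2)>a^2M$. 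This holds for $r_c>M$ because the left side is increasing in $r_c$ (its derivative is $3(r_c-M)^2$) and equals $M(M^2-a^2)\ge 0$ at $r_c=M$ --- but note that it is only \emph{non-strict} there, and it genuinely fails in the extremal regime $|a|=M$, $r_c\le M$, which the paper must (and does) exclude by observing that such $r_c$ are not outside the horizon. So the ``one line'' you are deferring is precisely where the content of the lemma lives, including a boundary case that your phrasing would silently miss. Your additional worries (the location of the extra root relative to $r_h$, the possible triple root) are reasonable but secondary; fix the verification of $Ar_c^2+2Br_c+C>0$ via \eqref{Eqellq0} with the case analysis at $r_c=M$ and $|a|=M$, and the proof is complete.
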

\begin{proof}
 Denote by $\bar r<\bar{\bar r} $ the two roots of the factor $Ar^2+2Br+C $ in \eqref{EqRFactor}. Note that  $AC<0$, so we have $\bar r<0$. The lemma follows from the fact that $\bar{\bar r}<r_c$ (see Figure \ref{FigR0}). To prove it, it is enough  to show that $Ar_c^2+2Br_c+C>0$, which is equivalent to $3r_c^2>\frac{a^2 q}{r_c^2}$.   Using \eqref{Eqellq0}, the inequality is further reduced to $r_c(r_c-M)^2>M(r_c^2-2Mr_c+a^2)$ in the  case of $r_c> M$, which is  clearly true.  If $r_c=M$, the inequality is still true if $|a|<M$.  In  the extreme case $|a|=M$, the  event horizon is  $r=M$ so  the inequality is only violated when $r_c\leq M$  that means that the photon shell is not outside the event horizon. It can be checked that when $|a|=M$ and $r_c\leq M$, there is no bound  orbit lying  outside the event horizon.
\end{proof}

We next consider the $\mu^2=1$ case. Consider an admissible $r_c\in \mathcal R$, then we have \[R(r)=(E^2-1)(r-r_1)(r-r_c)^2(r-r_3),\ r_1<r_c<r_3.\]
See Figure \ref{FigR}.
Besides the bound spherical orbit, there is also a homoclinic orbit with $r(\lambda)$ approaching $r_c$ as $\lambda\to\pm\infty$. We do not treat the dynamics of the homoclinic orbit in this paper but refer readers to \cite{X} for more details.

\subsection{KAM and the photon shell dynamics}
We next consider the dynamics on the photon shell in Kerr spacetime. Restricted to the NHIM $N$ (c.f. \eqref{EqNHIMK}), the Hamiltonian $H_N$ is \eqref{EqHamPhotonK} with two degrees of freedom. Each orbit has constant radius hence lies on a sphere. The azimuthal angle $\varphi$ rotates periodically in $[0,2\pi)$ and the latitudinal angle $\theta$ oscillates within a spherical band symmetric around the equator according to \eqref{EqVerticalK}.



\subsubsection{KAM and twist map nondegeneracy}
We next introduce action-angle coordinates for the Hamiltonian $H_N$. In the equation $\Sigma^2\dot \theta^2=\Theta$ (c.f. \eqref{EqofMotion}), we substitute  $p_\theta=\Sigma\dot \theta$ (c.f. \eqref{EqLegendreK}) to yield $p_\theta^2=\Theta$. We consider both cases $\mu^2=0$ and $\mu^2=1$ together, and always assume $Q>0$.  When $\mu^2=0$, the Carter constant $Q$ can be negative, which case can be considered similar to the procedure that we are doing now.

We next introduce the action variables $J_\varphi:=p_\varphi=L_z$ and \begin{equation}\label{EqJthetaK}J_\theta=\frac{1}{2\pi}\oint p_\theta\,d\theta=\frac{1}{\pi}\int_{\theta^-}^{\theta^+} \sqrt\Theta\,d\theta=\frac{1}{\pi}\int_{\theta^-}^{\theta^+} \sqrt{Q-U(\theta)}\,d\theta,\end{equation}
where $\theta^-<\theta^+$ are two roots of the integrand (c.f. \eqref{EqVerticalK}). We further introduce their dual angular variables $\al_\theta, \al_\varphi\in \T$ respectively using the symplectic form (c.f. Section 50 of \cite{A89}). In the new coordinates, we get that the Hamiltonian is a function of $J_\theta$ and $L_z$ only, and the the dynamics of $(\al_\theta, \al_\varphi)$ on $\T^2$ is a linear flow with frequency $\omega=(\omega_\theta, \omega_\varphi)=(\frac{\partial H_N}{\partial J_\theta},\frac{\partial H_N}{\partial L_z})$. Again the KAM nondegeneracy and the twist condition amount to show that the ratio $\nu:=\omega_\varphi/\omega_\theta$ has no critical point.

Here we point out some essential differences between the massive case $(\mu^2=1)$ and massless case $(\mu^2=0)$. In the case of $\mu^2=0$, we note that the $E$ variable can be eliminated by the rescaling $L_z\mapsto L_z/E,\ Q\mapsto Q/E^2,\ p_\theta\mapsto p_\theta/E$. The RHS of equation \eqref{Eqellq0} has no dependence on the particle's $E$, nor does the Hamiltonian \eqref{EqHamPhotonK}.
Thus the dynamics, in particular the frequency ratio $\nu$, on the photon shell has no dependence on $E$ so in the following it is enough to choose $E=1$. However, for the massive case, the energy $E$ does play a role in the frequency ratio $\nu$.

In the proof of the next theorem, we provide explicit formula for the ratio $\nu$. Let $a,M,E$ be fixed constants. Restricted to the energy level $H_N(J_\theta, L_z)=-\frac12\mu^2$, the ratio $\nu$ depends only on one parameter that can be chosen to be either $J_\theta$ or $L_z$. To be more physically relevant, we introduce a new parametrization as follows. 
 In the Kerr case, the photon shell is foliated by bound unstable orbits with constant radius varying in an interval $\mathcal R$.  Recall that in equation \eqref{Eqellq} we have expressed $L_z/E$ and $Q/E$ as functions of $r_c$ which is the radius of the photo sphere.  Since both $(L_z,Q)$ and $(L_z,J_\theta)$ can be used as independent integrals of the Hamiltonian $H_N$, there is a map $f$ relating the two sets of integrals: $g(Q,L_z)=(J_\theta,L_z)$. Therefore $\nu\circ g$ can be considered as a function of $r_c$ using \eqref{Eqellq}. We thus obtain a reparametrization of the ratio $\nu$ in terms of the radius $r_c$ of the slice of the photon shell.

\begin{figure}
\begin{minipage}[t]{0.5\linewidth}
\centering
\includegraphics[width=2.in]{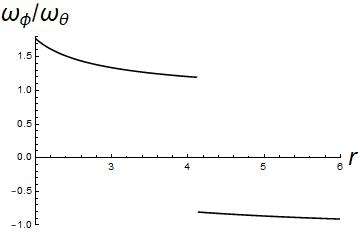}
\subcaption{$a=0.8,\ M=1,\ E=0.96$}
\label{FigPhotonK}
\end{minipage}%
\begin{minipage}[t]{0.5\linewidth}
\centering
\includegraphics[width=2.in]{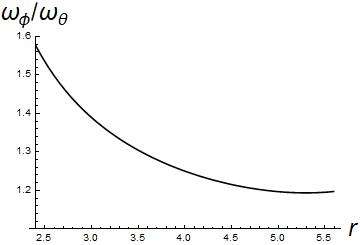}
\subcaption{$a=0.8,\ M=1,\ E=0.92$}
\label{FigPhotonK1}
\end{minipage}
\caption{The frequency ratio $\omega_\varphi/\omega_\theta$ in the $\mu^2=1$ case}
\end{figure}




\begin{Thm}\label{ThmPhotonKerr}Let $ \eps h_{\mu\nu}dx^\mu dx^\nu$ be a stationary perturbation of the Kerr metric. In either of the following two settings
\begin{enumerate}
\item In the massless case $\mu^2=0$, there exists an open set of parameter $a/M$;
\item In the massive case $\mu^2=1$,  there exists an open set of parameters $a,M,E$.
\end{enumerate}such that the following holds. Denote by $U\subset \mathcal R$ a small neighborhood (may be empty) of the point of discontinuity of $\nu\circ g(Q(r_c),L_z(r_c)).$
For $\eps$ sufficiently small, we have
\begin{enumerate}
\item There is a closed subset $\mathcal C$ of $\mathcal R\setminus U$ with measure $|\mathcal R\setminus(U\cup\mathcal C)|=O(\eps^{1/2})$ such that for each $r_c\in \mathcal C$ in the perturbed system there is a photon sphere on which all  orbit is periodic or quasiperiodic with frequency $\nu\circ g(Q(r_c),L_z(r_c))$ and all orbit lies in $\mathcal M_\nu$.
\item Each open interval in $\mathcal R\setminus (U\cup\mathcal C)$ corresponds to a Birkhoff instability region where the photon spheres are broken in the sense that $\mathcal M_\nu$ with $\nu\circ g(Q(r_c),L_z(r_c)),\ r_c\in \mathcal R\setminus \mathcal C,$  are all broken (see Definition \ref{DefBroken}), and twist map theory $($c.f. Theorem \ref{ThmAM} $(2.b), (3.b)$ and Theorem \ref{ThmMather}$)$ applies.
\end{enumerate}
\end{Thm}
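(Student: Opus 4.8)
The plan is to reduce Theorem \ref{ThmPhotonKerr} to a verification of the twist (non-degeneracy) condition for the integrable map $\phi_0$ obtained by the energetic reduction of $H_N$, and then invoke Moser's theorem together with the Aubry-Mather theory stated in Section \ref{SSTwist}. First I would set up the action-angle coordinates $(\al_\theta,J_\theta,\al_\varphi,L_z)$ for $H_N$ as in \eqref{EqJthetaK}, restrict to the level set $H_N=-\tfrac12\mu^2$, and perform the energetic reduction of Section \ref{SSERed} with $-J_\theta$ as the new (autonomous) Hamiltonian and $\al_\theta$ playing the role of time; this produces the return map to $\{\al_\theta=0\}$ of the form \eqref{EqTwist0} with rotation number $\nu=\omega_\varphi/\omega_\theta=\partial_{L_z}H_N/\partial_{J_\theta}H_N$. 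The content of the theorem is then: (i) $\nu$, reparametrized by $r_c$ via $g$ and \eqref{Eqellq0}/\eqref{Eqellq}, is smooth and strictly monotone on each side of a single discontinuity point; (ii) the twist map results apply.

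The main work is step (i), the monotonicity of $\nu$ in $r_c$. The key move, as in the proof of Proposition \ref{PropConvexityS}, is to avoid the fact that $H_N$ is not explicit in $(J_\theta,L_z)$ by passing to the chart $(Q,L_z)$: writing $f(J_\theta,L_z)=(L^2\text{-type quantity},L_z)$ and inverting $Df$ exactly as in that proof, one expresses $\omega_\theta,\omega_\varphi$ through the partial derivatives $\partial_Q J_\theta$, $\partial_{L_z}J_\theta$, which are convergent integrals over $[\theta^-,\theta^+]$ coming from \eqref{EqJthetaK}. Concretely $\partial_Q J_\theta=\tfrac1{2\pi}\int_{\theta^-}^{\theta^+}(Q-U(\theta))^{-1/2}d\theta$ and $\partial_{L_z}J_\theta=-\tfrac1{2\pi}\int_{\theta^-}^{\theta^+}U_{L_z}(\theta)(Q-U(\theta))^{-1/2}d\theta$ with $U$ as in \eqref{EqVerticalK}, and as in the Schwarzschild computation the second integrand is (up to constants) $d\varphi/d\theta$ along the orbit. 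Then $\nu$ becomes an explicit ratio of such integrals composed with the explicit functions $Q(r_c),L_z(r_c)$ from \eqref{Eqellq0}/\eqref{Eqellq}; one differentiates in $r_c$ (using $dr_c$, not $dJ_\theta$, precisely to keep the improper integrals convergent, as explained in Section \ref{SSSDerivatives}) and checks the sign. I would do this for the explicit value $a/M=0.8$ (and, in the massive case, an explicit admissible $E$) so that the sign is read off from the plots in Figures \ref{FigPhotonK0}, \ref{FigPhotonK}, \ref{FigPhotonK1}; smoothness of the action-angle change then propagates monotonicity to an open set of parameters. The discontinuity at $r_\star$ and the jump $\nu(r_{\star-})-\nu(r_{\star+})=2$ I would treat separately following Section \ref{SSSDisc}: at $r_\star$ the sign of $L_z$ changes (prograde to retrograde), $\theta^\pm$ no longer straddle the equator symmetrically in the same way, and a standard monodromy/branch analysis of $\oint d\varphi$ produces the integer jump $2$; this forces the excision of the neighborhood $U$ of $r_\star$.

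For step (ii), once monotonicity of $\nu$ on each component of $\mathcal R\setminus U$ is established, the map $\phi_0$ is a genuine integrable twist map in the sense of Definition \ref{DefTwist}, and a stationary perturbation $\eps h_{\mu\nu}dx^\mu dx^\nu$ perturbs it to $\phi_\eps=\phi_0+\eps f$ within the class of smooth exact symplectic maps (stationarity keeps $E$ a constant of motion so the reduction still goes through, and persistence of the NHIM $N$ from Appendix \ref{SNHIM} guarantees the perturbed photon shell carries a well-defined return map). Moser's twist theorem then yields, for every $\nu(r_c)\in \mathrm{DC}(C,\tau)$ with $C$ independent of $\eps$, an invariant circle $\tilde S_{r_c}$ close to $S_{r_c}$ on which the dynamics is conjugate to rotation by $\nu(r_c)$; the set $\mathcal C$ of such $r_c$ is closed with complement of measure $O(\eps^{1/2})$ by the standard Diophantine measure estimate, giving part (1). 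Each complementary open interval lies between two invariant circles, i.e.\ is a Birkhoff instability region, so $S_{r_c}$ is broken in the sense of Definition \ref{DefBroken}, and Theorem \ref{ThmAM}(2.b),(3.b) and Theorem \ref{ThmMather} supply the periodic, heteroclinic and Denjoy-type orbits of the prescribed rotation numbers, giving part (2). The main obstacle is unquestionably the monotonicity verification in step (i): proving $\partial_{r_c}\nu\neq 0$ rigorously requires controlling the ratio of the two elliptic-type integrals and its derivative, where the $r_c$-dependence enters both through the integrands ($U$ depends on $L_z(r_c),E$) and through the endpoints $\theta^\pm(r_c)$, so some care with the endpoint contributions (which cancel since the integrand vanishes there) and with the sign near $r_\star$ is needed; reducing to explicit parameter values and the plotted curves is what makes this tractable.
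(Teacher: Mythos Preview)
Your proposal is correct and follows essentially the same route as the paper: reduce to the return map via energetic reduction, express $\nu=\omega_\varphi/\omega_\theta$ as an explicit function of $r_c$ through the integrals coming from $J_\theta$ together with \eqref{Eqellq0}/\eqref{Eqellq}, verify monotonicity at the explicit parameter value $a/M=0.8$ (and an explicit $E$ in the massive case) by appeal to the plotted curves, then invoke NHIM persistence, Moser's twist theorem, and the Aubry--Mather package for parts (1)--(2). The only notable difference is bookkeeping: you propose the chart $f(J_\theta,L_z)=(Q,L_z)$ in the style of Proposition~\ref{PropConvexityS}, whereas the paper uses $f(J_\theta,L_z)=(H,Q)$, which makes $\omega$ literally the first row of $Df$ and yields directly $\nu=-\bigl(\partial J_\theta/\partial Q\bigr)\big/\bigl(\partial L_z/\partial Q\bigr)$ with the $H$-constraint built in; your chart requires one extra chain-rule step to impose $H_N=-\tfrac12\mu^2$, but arrives at the same formula \eqref{EqFrequencyK}.
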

\begin{proof}
Let $f$ be the map relating the two set of constants of motions $(J_\theta, L_z)$ and $(H,Q)$, i.e. $f(J_\theta, L_z)=(H,Q)$. From $Df\cdot Df^{-1}=I$, we obtain
$$\left[\begin{array}{ccc}
\frac{\partial H}{\partial J_\theta}&\frac{\partial H}{\partial L_z}\\
\frac{\partial Q}{\partial J_\theta}&\frac{\partial Q}{\partial L_z}
\end{array}\right]\left[\begin{array}{ccc}
\frac{\partial J_\theta}{\partial H}&\frac{\partial J_\theta}{\partial Q}\\
\frac{\partial L_z}{\partial H}&\frac{\partial L_z}{\partial Q}
\end{array}\right]=I.$$
This implies $Df=\frac{1}{\det Df^{-1}}\left[\begin{array}{ccc}
\frac{\partial L_z}{\partial Q}&- \frac{\partial J_\theta}{\partial Q}\\
-\frac{\partial L_z}{\partial H}&\frac{\partial J_\theta}{\partial H}
\end{array}\right].$ Denote by $$\omega=(\omega_\theta,\omega_\varphi)=\frac{1}{\det Df^{-1}}\left( \frac{\partial L_z}{\partial Q},- \frac{\partial J_\theta}{\partial Q}\right)$$ the first row of $Df$.
We next provide explicit formula for the derivatives in the above expression (noting that $U$ depends on $L_z$ and $L_z$ is a function of $Q$)
\begin{equation*}
\begin{aligned}
\frac{\partial J_\theta}{\partial Q}&=\frac{1}{2\pi}\int_{\theta^-}^{\theta^+} \frac{1}{\sqrt{Q-U(\theta)}} \,d\theta-\frac{L_z}{\pi}\frac{\partial L_z}{\partial Q}\int_{\theta^-}^{\theta^+} \frac{\cos^2\theta}{\sin^2\theta}\frac{1}{\sqrt{Q-U(\theta)}} \,d\theta.
\end{aligned}
\end{equation*}
We are only interested in the ratio $\frac{\omega_\varphi}{\omega_\theta}$, then we get
\begin{equation}\label{EqFrequencyK}
\begin{aligned}
\frac{\omega_\varphi}{\omega_\theta}&=- \frac{\partial J_\theta}{\partial Q}/\frac{\partial L_z}{\partial Q}\\
&=- \left(\frac{\partial L_z}{\partial Q}\right)^{-1}\frac{1}{2\pi}\int_{\theta^-}^{\theta^+} \frac{1}{\sqrt{Q-U(\theta)}} \,d\theta+\frac{L_z}{\pi}\int_{\theta^-}^{\theta^+} \frac{\cos^2\theta}{\sin^2\theta}\frac{d\theta}{\sqrt{Q-U(\theta)}}.\end{aligned}\end{equation}
Up to now the formula remains the same formally for the $\mu^2=0$ and $\mu^2=1$ cases.
We next express $\frac{\omega_\varphi}{\omega_\theta}$ as a function of $r_c$. It is enough to express $L_z$ and  $Q$ as a function of $r_c$ in \eqref{Eqellq} for the $\mu^2=1$ case and \eqref{Eqellq0} for the $\mu^2=0$ case.  
We get $\frac{\partial L_z}{\partial Q}=\frac{d\ell}{d r}/\frac{dq}{dr}$ from \eqref{Eqellq}  for the $\mu^2=1$ case \eqref{Eqellq0} for the $\mu^2=0$ case. Thus the ratio $\omega_\varphi/\omega_\theta$ is a function of $r_c$ in both cases.

In the massless case, the ratio $\omega_\varphi/\omega_\theta$ is independent of $E$. Figure \ref{FigPhotonK0} is plotted with choice of parameter $a/M=0.8$.
In the massive case, we plot Figure \ref{FigPhotonK} with parameter $a=0.8,M=1,E=0.96$ and Figure \ref{FigPhotonK1} with parameter $a=0.8,M=1,E=0.92 $, corresponding to Figure \ref{fig:side:aChandra} and \ref{fig:side:bChandra} respectively. We note that Figure \ref{FigPhotonK} is discontinuous which is related to the fact that $L_z$ changes sign, and we have that \ref{FigPhotonK1} is continuous since $\ell$ does not change sign in Figure \ref{fig:side:bChandra}. We shall explain the discontinuity in the next section. Also there is a critical point in \ref{FigPhotonK1}, so we can only apply KAM and twist map theory outside a small neighborhood of the critical point.
\end{proof}

\subsubsection{The point of discontinuity}\label{SSSDisc}

 There is an interesting feature in the graph in Figure \ref{FigPhotonK0} and Figure \ref{FigPhotonK} that is the appearance of a discontinuity point.  The point of discontinuity occurs at the place where $\ell(r_c)=0$ in \eqref{Eqellq}. This means that the orbit becomes from prograde to retrograde as $r$ crosses the zero of $L_z$ when increasing. This phenomenon was noticed and explained in \cite{Wi,T}. Let us give a more mathematical explanation.

 We see from Section \ref{SSSVertical} that the vertical dynamics in the $L_z=0$ case differs drastically from the $L_z\neq 0$ case. When $L_z=0$ and $\bar Q:=Q-(\mu^2-E^2)a^2>0$, the latitudinal motion $\theta$ ranges over $[0,\pi]$, in particular, the orbit passes through the south and north pole. For $|L_z|\neq 0$ but sufficient small, the latitudinal angle $\theta\in [\dt,\pi-\dt]$ where $\dt$ solves the equation $Q-U(\dt)=0$. In particular, as $L_z\to0$ we have $\frac{L_z^2}{\sin^2\dt}\to \bar Q$.

 The discontinuity originates from the second term on the RHS of \eqref{EqFrequencyK}. Another way to see it is to evaluate the quantity $\Delta\varphi$ that is the change of the azimuthal angle during each period of the $\theta$-motion. We have
 $$\Delta\varphi=\oint\frac{d\varphi}{d\theta}d\theta=2\int^{\pi-\dt}_{\dt}\frac{\frac{a}{\Delta}P(r)-\left(aE-\frac{L_z}{\sin^2\theta}\right)}{\sqrt{Q-U(\theta)}}d\theta.$$
 We shall now work on $\Delta\varphi$ and it will turn out that the the discontinuity in $\Delta\varphi$ agrees with the second term of the RHS of \ref{EqFrequencyK}.

 We note that $\frac{d\varphi}{d\lambda}= \frac{L_z}{\sin^2\theta}+O(1)=\frac{\bar Q}{L_z}+O(1)$ as $\theta\to \dt$ or $\pi-\dt$.
 Then we have $\frac{d\varphi}{d\theta}\simeq \frac{L_z}{\sin^2\theta}\frac{1}{\sqrt{Q-U}}\simeq  \frac{L_z}{\sin^2\theta\sqrt{\bar Q-\frac{L_z^2}{\sin^2\theta}}}$ when $\theta$ is $c\sqrt\dt$-close to $\dt$ or $\pi-\dt$ for large number $c$. For $\theta\in [c\sqrt\dt, \pi-c\sqrt\dt]$, the $L_z$ dependence in the integrand of $\Delta\varphi$ is negligible so the integral $\int_{c\sqrt\dt}^{\pi-c\sqrt\dt} \frac{d\varphi}{d\theta}d\theta$ is almost the same for $L_z>0$ and $L_z<0$ with small modulus. The contribution to $\Delta\varphi$ when $\theta$ varies in the interval $[\dt,c\sqrt\dt]$ is given by
 \begin{equation*}
 \begin{aligned}
& L_z\int_\dt^{c\sqrt\dt}\frac{d\theta}{\sin^2\theta \sqrt{\bar Q-\frac{L_z^2}{\sin^2\theta}}}\simeq  L_z\int_\dt^{c\sqrt\dt}\frac{d\theta}{\theta \sqrt{\theta^2\bar Q-\dt^2\bar Q}}\\
 =&\mathrm{sgn}(L_z)\int_1^{c/\sqrt\dt}\frac{dx}{x\sqrt{x^2-1}}
 \simeq\mathrm{sgn}(L_z)\int_0^1\frac{dy}{\sqrt{1-y^2}}\\
 =&\mathrm{sgn}(L_z)\frac\pi2
 \end{aligned}
 \end{equation*}
 where we substitute $L_z\simeq \sqrt{\bar Q}\dt,\ \sin\theta\simeq\theta$ and the $\simeq$ becomes $=$ when we take limit $L_z\to 0$ hence $\dt\to 0$. During each period of the $\theta$ motion, this discontinuity should be counted four times, thus the total discontinuity when $L_z$ changes sign is $4\pi$.  In particular, when $L_z$ changes sign, the orbit changes from prograde to retrograde. Similar calculation also occurs in \eqref{EqFrequencyK}. Since we use a different normalization than $\Delta\varphi$, the discontinuity in \eqref{EqFrequencyK} is 2.

 We remark that for some choices of parameters $a,E$, the function $\ell(r)$ in  \eqref{Eqellq} may be strictly positive (see Figure \ref{fig:side:bChandra}), in which case the ratio $\nu$ does not change sign.



\subsection{KAM and QPO}

In this section, we consider only the timelike case with $\mu^2=1$.

Suppose $E^2<1$ and $-R$ attains its local max at $r^*(>r_h)$ and a local min at $r_*>r^*$. Suppose also $-R(r^*)>0$ and $-R(r_*)<0$ (imagine that we shift the graph of $-R$ in Figure \ref{FigR} upward slightly).
Let $C>0$ be a large constant, and we define the following bounded part of the phase space where all Kerr orbits are bounded hence Liouville-Arnold theorem applies
\begin{equation}\label{EqBK}\mathcal B(C):=\left\{ E^2<1,\  |L_z|<C,\ 0<Q<C\ |\ -R(r^*)>0,\ -R(r_*)<0\right\}\cap H^{-1}(-\frac12). \end{equation}
This is a neighborhood of the stable circular orbit.
 Denote by $r^-<r^+$ the two largest roots of $-R(r)=0$, then we have $r^*<r_-<r_*<r_+$. 
In the $(\theta,p_\theta)$-component, the dynamics is again periodic around the point $(\theta,p_\theta)=(\pi/2,0)$ as we have analyzed in Section \ref{SSSVertical}.

 Again, similar to the Schwarzchild case, we consider for simplicity perturbations that are stationary and axisymmetric, thus $E$ and $L_z$ remain constants of motion in the perturbed system.  Therefore we fix values of $L_z$ and $E$, and introduce the action-angle coordinates $(\al_r,J_r,\al_\theta, J_\theta)\in \T\times \R_+\times \T\times \R_+$ as follows  \begin{equation*}
\begin{aligned}
J_r&=\frac{1}{2\pi}\oint p_r\,dr=\frac{1}{\pi}\int_{r^-}^{r^+}\frac{\sqrt{R(r)}}{\Delta(r)}\,dr,\\
J_\theta&=\frac{1}{2\pi}\oint p_\theta\,d\theta=\frac{1}{\pi}\int_{\theta^-}^{\theta^+}\sqrt{Q-U(\theta)}\,d\theta,
\end{aligned}
\end{equation*}
where $r^-<r^+$ and $\theta^-<\theta^+$ are zeros of the integrands above respectively. Again we get the Hamiltonian written in these coordinates $H=H(J_r,J_\theta)$ and we have the
$\omega_r:=\frac{\partial H}{\partial J_r}$ the radial frequency and $\omega_\theta:=\frac{\partial H}{\partial J_\theta} $ the latitudinal frequency. Both the expressions of $J_r$ and $J_\theta$ depend on $Q$, so we can write the frequency ratio $\omega_r/\omega_\theta$ as a function of $Q$.

\begin{Thm}\label{ThmQPOK} Let $\eps h_{\mu\nu} dx^\mu dx^\nu$ be a stationary and axisymmetric perturbation to the Kerr metric. Then there exists an open set of parameters $a,E,L_z$ such that there are a neighborhood $\mathcal D$ of the point $(r,p_r,\theta,p_\theta)=(r_*,0,\pi/2, 0)$ and $\eps_0$ such that KAM and the twist map theory are applicable to the perturbed system in $\mathcal D$ and for all $0<|\eps|<\eps_0$.
\end{Thm}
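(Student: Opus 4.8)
The plan is to run exactly the argument used for Theorem \ref{ThmQPOS}, with the Carter constant $Q$ playing the role that $L^2$ played in the Schwarzschild case. First I would fix parameters $(a,E,L_z)$ with $E^2<1$ and $|L_z|<C$, $0<Q<C$ so that the Mino-time radial potential $-R$ has the local-max/local-min profile used to define $\mathcal B(C)$ in \eqref{EqBK}; then $(r,p_r,\theta,p_\theta)=(r_*,0,\pi/2,0)$ is a genuine (elliptic) stable circular orbit, all nearby orbits are bounded, and Liouville--Arnold (Theorem \ref{ThmLA}) produces action-angle coordinates $(\al_r,J_r,\al_\theta,J_\theta)$ on a neighborhood $\mathcal D$ of it, with $J_r=\frac1\pi\int_{r^-}^{r^+}\frac{\sqrt{R(r)}}{\Delta(r)}\,dr$ and $J_\theta=\frac1\pi\int_{\theta^-}^{\theta^+}\sqrt{Q-U(\theta)}\,d\theta$ as displayed before the statement. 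Restricting to the energy level $H=-\frac12$ and performing the energetic reduction of Section \ref{SSERed} (taking $-J_\theta$ as the new Hamiltonian, with $\al_\theta$ as time) reduces the problem to verifying that the return map on $(\al_r,J_r)$ is an area-preserving twist map, i.e.\ that $\nu:=\omega_r/\omega_\theta$ is locally strictly monotone in the relevant action variable.

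Next I would introduce the map $f(J_r,J_\theta)=(H,Q)$ between the two sets of commuting integrals (with $E,L_z$ fixed) and use $Df\cdot Df^{-1}=I$. The decisive simplification, just as in Theorem \ref{ThmQPOS}, is that with $E,L_z$ fixed the action $J_\theta$ depends on $Q$ only, so $\partial J_\theta/\partial H=0$ and $Df^{-1}$ is upper triangular; reading off the first row of $Df$ gives $\omega=(\omega_r,\omega_\theta)$ and
$$\frac{\omega_r}{\omega_\theta}=-\frac{\partial J_\theta}{\partial Q}\Big/\frac{\partial J_r}{\partial Q}.$$
Both partials are explicit: since $U$ is independent of $Q$ and $\partial R/\partial Q=-\Delta$ by \eqref{EqRTheta}, one gets $\frac{\partial J_\theta}{\partial Q}=\frac1{2\pi}\int_{\theta^-}^{\theta^+}\frac{d\theta}{\sqrt{Q-U(\theta)}}$ and $\frac{\partial J_r}{\partial Q}=-\frac1{2\pi}\int_{r^-}^{r^+}\frac{dr}{\sqrt{R(r)}}$, whence
$$\nu=\frac{\omega_r}{\omega_\theta}=\int_{\theta^-}^{\theta^+}\frac{d\theta}{\sqrt{Q-U(\theta)}}\Big/\int_{r^-}^{r^+}\frac{dr}{\sqrt{R(r)}}$$
is an explicit function of $Q$ (with $a,E,L_z$ as parameters), the analogue of the Schwarzschild quantity plotted in Figure \ref{FigQPOS}. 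Note $\partial J_\theta/\partial Q>0$ and $\partial J_r/\partial H>0$, so $Df^{-1}$ is nonsingular, $Q\mapsto J_\theta$ is a diffeomorphism, and the twist condition for the reduced map is equivalent to $d\nu/dQ\neq 0$ on the relevant interval.

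I would then verify $d\nu/dQ\neq 0$ for a concrete admissible triple $(a,E,L_z)$ by evaluating the two integrals numerically, exactly as Figure \ref{FigQPOS} is produced in the Schwarzschild case; since the action-angle change of variables is smooth in the parameters, nonvanishing persists on an open set of $(a,E,L_z)$, which yields the open-parameter claim. Moser's twist theorem then gives the invariant circles for Diophantine $\nu$ and the Aubry--Mather theory gives the remaining structure in the Birkhoff instability regions, proving the theorem. If instead one wants the honest KAM (isoenergetic) nondegeneracy $\det D^2H\neq 0$, one applies the identity for $\nabla\omega$ of Section \ref{SSSDerivatives}, using the area-splitting device there (splitting $J_r$ and $J_\theta$ near the turning points $r^\pm,\theta^\pm$ and realizing part of the graph over the momentum axis) so that differentiating once more does not produce divergent improper integrals; I would only sketch this, as in the Schwarzschild case.

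\textbf{Main obstacle.} As emphasized in the introduction, the Hamiltonian $H$ is not available as an explicit function of $(J_r,J_\theta)$, so the whole argument must be carried through the implicitly defined change of variables; the genuinely delicate point is controlling the second-derivative quantity $d\nu/dQ$ near the endpoints $r^\pm$, $\theta^\pm$ where the integrands are singular, which is what the area-splitting trick of Section \ref{SSSDerivatives} is for. The crux is therefore the numerical check that $d\nu/dQ$ does not vanish identically and that the interval on which it is nonzero is nonempty for some admissible $(a,E,L_z)$; everything else is the same bookkeeping as in Theorem \ref{ThmQPOS}.
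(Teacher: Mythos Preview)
Your proposal follows essentially the same route as the paper and lands on the correct formula for $\nu=\omega_r/\omega_\theta$, but one step of your reasoning is incorrect and is harmless only by luck. You assert that with $E,L_z$ fixed the action $J_\theta$ depends on $Q$ only, so $\partial J_\theta/\partial H=0$ and $Df^{-1}$ is upper triangular ``just as in Theorem \ref{ThmQPOS}.'' This is false for Kerr: the latitudinal potential $U(\theta)=\big[(\mu^2-E^2)a^2+L_z^2/\sin^2\theta\big]\cos^2\theta$ depends on $\mu^2=-2H$ through the $a^2$ term, and the paper explicitly computes
\[
\frac{\partial J_\theta}{\partial H}=\frac{2}{\pi}\int_{\theta^-}^{\theta^+}\frac{a^2\cos^2\theta}{\sqrt{Q-U(\theta)}}\,d\theta\neq 0.
\]
This quantity vanishes only when $a=0$, i.e.\ precisely in the Schwarzschild case you are importing the argument from.

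The reason your final formula is nonetheless correct is that for a general $2\times2$ matrix the first row of $Df=(Df^{-1})^{-1}$ is, by the adjugate formula, $\frac{1}{\det Df^{-1}}\big(\partial J_\theta/\partial Q,\,-\partial J_r/\partial Q\big)$; hence the ratio $\omega_r/\omega_\theta=-(\partial J_\theta/\partial Q)\big/(\partial J_r/\partial Q)$ involves only the $Q$-derivatives regardless of whether $Df^{-1}$ is triangular. Your nonsingularity check also survives once corrected: with $\partial J_r/\partial H>0$, $\partial J_\theta/\partial Q>0$, $\partial J_r/\partial Q<0$, and now $\partial J_\theta/\partial H>0$, one still gets $\det Df^{-1}>0$. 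With this fix your argument coincides with the paper's proof; the numerical verification (Figure \ref{FigQPOK}, parameters $a=0.8,\,E=0.96,\,L_z=1.5$) and the openness-by-smoothness step are exactly as you describe.
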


\begin{figure}
\centering
\includegraphics[width=0.4\textwidth]{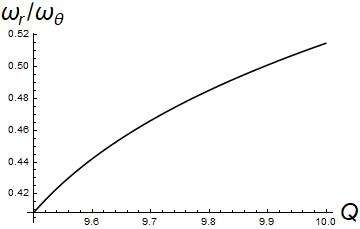}
\caption{The ratio $\omega_r/\omega_\theta$ for Kerr with parameters $a=0.8, E=0.96, L_z=1.5$}
\label{FigQPOK}
\end{figure}
Figure \ref{FigQPOK} is plotted with  parameters $a=0.8, E=0.96, L_z=1.5$.

 An example of such perturbations is given by the Monko-Novikov metric \cite{GAC}, which is so complicated that we do not cite it here. In \cite{GAC} the authors analyzed the $1:2$ and $2:3$ resonances and discovered that the dynamics of the perturbed Kerr is chaotic. The frequency ratio $\omega_r/\omega_\theta$ has been studied by many authors, c.f. \cite{BGH,S} etc.


\begin{proof}[Proof of Theorem \ref{ThmQPOK}]
Let $f$ be the map relating two sets of constants of motion $(J_r,J_\theta)$ and $H,Q$: $f(J_r, J_\theta)=(H,Q)$. From $Df\cdot Df^{-1}=I$, we obtain
$$\left[\begin{array}{ccc}
\frac{\partial H}{\partial J_r}&\frac{\partial H}{\partial J_\theta}\\
\frac{\partial Q}{\partial J_r}&\frac{\partial Q}{\partial J_\theta}
\end{array}\right]\left[\begin{array}{ccc}
\frac{\partial J_r}{\partial H}&\frac{\partial J_r}{\partial Q}\\
\frac{\partial J_\theta}{\partial H}&\frac{\partial J_\theta}{\partial Q}
\end{array}\right]=I.$$
This implies $Df=\frac{1}{\det Df^{-1}}\left[\begin{array}{ccc}
\frac{\partial J_\theta}{\partial Q}&- \frac{\partial J_r}{\partial Q}\\
-\frac{\partial J_\theta}{\partial H}&\frac{\partial J_r}{\partial H}
\end{array}\right].$ Denote by $$\omega=(\omega_r,\omega_\theta)=\frac{1}{\det Df^{-1}}\left( \frac{\partial J_\theta}{\partial Q},- \frac{\partial J_r}{\partial Q}\right)$$ the first row of $Df$.
In our setting, we fix $H=-1/2, E<1$ and $L_z$ as constants, then the only remaining parameter is $Q$. We next provide explicit formula for the derivatives in the above expression
\begin{equation}
\begin{aligned}
\frac{\partial J_r}{\partial H}&=\frac{1}{\pi}\int_{r^-}^{r^+}\frac{r^2}{\sqrt R}\,dr,\quad
\frac{\partial J_r}{\partial Q}=-\frac{1}{2\pi}\int_{r^-}^{r^+} \frac{1}{ \sqrt R} \,dr,\\
\frac{\partial J_\theta}{\partial H}&=\frac{2}{\pi}\int_{\theta^-}^{\theta^+}\frac{a^2\cos^2\theta}{\sqrt{Q-U(\theta)}}\,d\theta,\quad
\frac{\partial J_\theta}{\partial Q}=\frac{1}{2\pi}\int_{\theta^-}^{\theta^+} \frac{1}{\sqrt{Q-U(\theta)}} \,d\theta.
\end{aligned}
\end{equation}
Then we get $$\frac{\omega_r}{\omega_\theta}=-\frac{\partial J_\theta}{\partial Q}/ \frac{\partial J_r}{\partial Q}= \int_{\theta^-}^{\theta^+} \frac{1}{\sqrt{Q-U(\theta)}} \,d\theta\Big/\int_{r^-}^{r^+} \frac{1}{ \sqrt R} \,dr.$$
Figure \ref{FigQPOK} is then plotted with this formula. This completes the proof of the theorem.

\end{proof}

Finally, for the sake of completeness, we provide formula for the verification of the KAM nondgeneracy conditions in Section \ref{SSKAM}. This part is similar to Section \ref{SSSDerivatives} so readers are suggested to skip it when first reading.
We need to evaluate $\nabla \omega$ where $\nabla $ is the derivative with respect to $J_r,J_\theta$. We thus arrive at
\begin{equation*}
\begin{aligned}\nabla\omega&=\frac{\partial(H,Q)}{\partial(J_r,J_\theta)}(\partial_{H},\partial_{Q})\omega=Df \cdot (\partial_{H},\partial_{Q})\left(\frac{1}{\det Df^{-1}}\left( \frac{\partial J_\theta}{\partial Q},- \frac{\partial J_r}{\partial Q}\right)\right)\\
&=\left(\frac{1}{\det Df^{-1}}Df \right) (\partial_{H},\partial_{Q})\left( \frac{\partial J_\theta}{\partial Q},- \frac{\partial J_r}{\partial Q}\right)\\
&+Df \cdot \left((\partial_{H},\partial_{Q})\left(\frac{1}{\det Df^{-1}}\right)\right)\otimes \left( \frac{\partial J_\theta}{\partial Q},- \frac{\partial J_r}{\partial Q}\right).
\end{aligned}\end{equation*}
We can then compute  $\nabla \omega$ similar to Section \ref{SSSDerivatives}. 

If we are interested in perturbations also depending on the angle $\varphi$, then we have a Hamiltonian system of three degrees of freedom and we have to consider three fundamental frequencies as in  Section \ref{SSS3Freq}. The procedure is similar to the Schwarzschild case in Section \ref{SSS3Freq}, though the computations are more involved. We skip the details here.
\appendix

\section{The theorem of normally hyperbolic invariant manifold}\label{SNHIM}
In this section we give the version of normally hyperbolic invariant manifold theorem that we used in the proof of our main theorem. The standard references are \cite{HPS, F}. 
\begin{Def}[NHIM]\label{DefNHIM} Let $N\subset M$ be a submanifold $($maybe noncompact$)$ invariant under $f$, $f(N) = N$. We say that $N$ is a normally hyperbolic invariant manifold if there exist a constant $C > 0$, rates $0 <\lb <\mu^{-1} < 1$ and an invariant $($under $Df)$ splitting for every $x \in N$
\[T_x M = E^s(x) \oplus E^u(x) \oplus T_xN
\]
in such a way that
\begin{equation}\nonumber
\begin{aligned}
v\in E^s(x)\quad &\Leftrightarrow\quad  |Df^n(x)v| \leq C\lb^n |v|, \quad n \geq 0,\\
v\in E^u(x)\quad &\Leftrightarrow \quad |Df^n(x)v| \leq C\lb^{|n|} |v|, \ \ n \leq 0,\\
v\in T_xN\quad &\Leftrightarrow \quad |Df^n(x)v| \leq C\mu^n |v|, \quad n \in \Z.
\end{aligned}
\end{equation}
Here the Riemannian metric $|\cdot|$ can be any prescribed one, which may change the constant $C$ but not $\lambda,\mu$.
\end{Def}

\begin{Thm}\label{ThmNHIM}Suppose $N$ is a NHIM under the $C^r$, $r>1$, diffeomorphism $f:\ M\to M$. Denote $\ell=\min\{r,\frac{|\ln\lambda|}{|\ln\mu|}\}$.
Then for any $C^r$ $f_\epsilon$ that is sufficiently close to $f$ in the $C^1$ norm,
\begin{enumerate}
\item  there exists a NHIM $N_\epsilon$ that is a $C^\ell$ graph over $N$,
\item$(${\it Invariant splitting}$)$ There exists  a splitting for $x\in N_\epsilon$
\begin{equation*}\label{EqSplitting}T_x M=E^u_{\epsilon}(x)\oplus E^{s}_{\epsilon}(x)\oplus T_xN_\epsilon
\end{equation*}
invariant under the map $f_\epsilon$.  The bundle $E^{u,s}_{\epsilon}(x)$ is $C^{\ell-1}$ in $x$.
\item There exist $C^\ell$ stable and unstable manifolds $W^s(N_\epsilon)$ and $W^u(N_\epsilon)$ that are invariant under $f$ and are tangent to $E_\epsilon^s\oplus TN_\epsilon$ and $E_\epsilon^u\oplus TN_\epsilon$ respectively.
\item  The stable and unstable manifolds $W^{u,s}(N_\epsilon)$ are fibered by the corresponding stable and unstable leaves $W^{u,s}_{x,\epsilon}:=\{y\ | d(f_\eps^n(y),f_\eps^n(x))\to 0, n\to +\infty\ \mathrm{for}\ s, n\to -\infty\ \mathrm{for}\ s\}$: $$W^{u}(N_\epsilon)=\cup_{x\in N_\epsilon} W_{x,\epsilon}^u,\quad W^{s}(N_\epsilon)=\cup_{x\in N_\epsilon} W_{x,\epsilon}^s.$$ 
\item The maps $x\mapsto W_{x,\epsilon}^{u,s}$ are $C^{\ell-j}$ when $W_{x,\epsilon}^{u,s}$ is given $C^j$ topology.
\item If $f$ and $f_\epsilon$ are Hamiltonian, then $N_\epsilon$ is symplectic and the map $f_\epsilon$ restricted to $N_\epsilon$ is also Hamiltonian.
\end{enumerate}
\end{Thm}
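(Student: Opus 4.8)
The plan is to follow the classical Hadamard--Perron graph transform method, exactly as in the standard references \cite{HPS, F}; I will only outline the steps. First I would fix a tubular neighborhood of $N$ together with a smooth extension over this neighborhood of the splitting $E^s\oplus E^u\oplus TN$, so that every manifold $C^1$-close to $N$ and lying in the neighborhood is the graph of a section $\sigma\colon N\to E^s\oplus E^u$ of small $C^0$ and Lipschitz norm. The requirement that $\mathrm{graph}(\sigma)$ be $f_\eps$-invariant becomes a fixed-point equation $\sigma=\Gamma_\eps(\sigma)$ for the associated graph transform. Since the normal rates satisfy $0<\lb<\mu^{-1}<1$, the map $\Gamma_\eps$ contracts the complete metric space of Lipschitz sections of Lipschitz constant $\le 1$ — with contraction factor of order $\lb\mu<1$ — provided $f_\eps$ is close enough to $f$ in $C^1$. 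Its unique fixed point is $N_\eps$: by construction $f_\eps$-invariant, a graph over $N$, Lipschitz, and the only such invariant set in the neighborhood.

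Next I would upgrade the regularity from Lipschitz to $C^\ell$. The device is to run the graph transform on the space of $C^k$ sections and apply the fiber contraction theorem of \cite{HPS}: the induced action on the $k$-jet of $\sigma$ along $N_\eps$ carries a factor of order $\lb\mu^k$, so the iteration is fiber-contracting and its limit section is genuinely $C^k$ precisely while $\lb\mu^k<1$, i.e. $k<|\ln\lb|/|\ln\mu|$. Combined with the ceiling $r$ coming from $f_\eps\in C^r$, this gives $N_\eps\in C^\ell$ with $\ell=\min\{r,|\ln\lb|/|\ln\mu|\}$. I expect this regularity bootstrap — extracting the \emph{sharp} derivative loss governed by the spectral gap ratio — to be the main technical obstacle; by contrast, the existence of an invariant Lipschitz (or $C^1$) manifold in the first step is a routine contraction argument.

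With $N_\eps$ available, I would linearize $f_\eps$ along it and repeat the graph-transform (equivalently, cone-field) construction on the Grassmannian bundle over $N_\eps$ to produce the $f_\eps$-invariant splitting $T_xM=E^u_\eps(x)\oplus E^s_\eps(x)\oplus T_xN_\eps$; since this consumes one derivative of $f_\eps$, the bundles come out $C^{\ell-1}$. A further application of the graph transform, to sections over $N_\eps$ valued in $E^s_\eps$ (respectively $E^u_\eps$), yields the $C^\ell$ manifolds $W^s(N_\eps)$ and $W^u(N_\eps)$ tangent to $E^s_\eps\oplus TN_\eps$ and $E^u_\eps\oplus TN_\eps$. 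The strong stable and unstable leaves $W^{s,u}_{x,\eps}$ are obtained by the same scheme inside $W^{s,u}(N_\eps)$, now using the gap between the normal contraction/expansion rate and the tangential rate $\mu$; they foliate $W^{s,u}(N_\eps)$, and — as is standard for invariant foliations — the dependence of a leaf on its base point $x$ costs additional derivatives, producing the regularity $C^{\ell-j}$ claimed in items (4)--(5).

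Finally, for the symplectic statement (6), I would argue as follows. Assume $f$ and $f_\eps$ preserve a symplectic form $\omega$ and that $N$ is symplectic, i.e. $\omega|_N$ is nondegenerate — as is the case for the photon shells carrying the Hamiltonians \eqref{HamN} and \eqref{EqHamPhotonK}. Nondegeneracy of a $2$-form is an open condition and $N_\eps$ is $C^1$-close to $N$, so $\omega|_{N_\eps}$ is again nondegenerate; hence $(N_\eps,\omega|_{N_\eps})$ is symplectic, and since $N_\eps$ is invariant with $f_\eps^*\omega=\omega$, the restriction $f_\eps|_{N_\eps}$ is a symplectomorphism. When $f_\eps$ is the time-$t$ map of a Hamiltonian flow $\phi^t_H$, the field $X_H$ is tangent to $N_\eps$ — which is flow-invariant by the uniqueness established in the first step, since $\phi^t_H$ commutes with $f_\eps$ — and for $v\in TN_\eps$ one has $\omega|_{N_\eps}(X_H,v)=\omega(X_H,v)=dH(v)=d(H|_{N_\eps})(v)$, so $X_H|_{N_\eps}=X_{H|_{N_\eps}}$ and the restricted dynamics is Hamiltonian with Hamiltonian $H|_{N_\eps}$; the return-map case follows by applying the same argument to the suspended flow.
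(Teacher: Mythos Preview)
Your outline is correct and follows exactly the classical Hadamard--Perron graph-transform approach of \cite{HPS, F}. The paper itself gives no proof of this theorem at all: it merely states the result in the appendix and refers the reader to the standard references \cite{HPS, F}, so your proposal in fact supplies more detail than the paper does.
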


\section{Aubry-Mather theory for twist maps}\label{AppAM}
  The dynamics of twist maps are studied in details in Aubry-Mather theory.  Let us summarize the main result and refer readers to \cite{B} for more detailed proofs. The theory can be put in a very general setting. For the sake of concreteness, we consider the following setting. Let $H:\ T^*\T^1\times \T^1\to \R$ be a nonautonomous 1-periodic Hamiltonian satisfying
 \begin{enumerate}
 \item $H(x,y,t)=H(x+1,y,t)=H(x,y,t+1)$;
 \item $\partial_y^2 H>0$ everywhere;
 \item $\frac{H(x,y,t)}{|y|}\to\infty$ as $|y|\to\infty$ for all $(x,t)\in \T^2$;
 \item The flow $\phi^t$ is complete, i.e. for each initial condition $z_0=(x_0,y_0)$, the orbit $\phi^t(z_0)$ is defined for all $t\in \R$.
 \end{enumerate}
 The time-1 map, $\phi^1:\ T^*\T^1\to T^*\T^1$ of  the  Hamiltonian flow is symplectic and is the composition of twist maps \cite{M1}.

 Let $H:\ T^*\T^1\times \T^1\to \R$ be as above and we introduce its dual Lagrangian by taking Legendre transform $L(x,\dot x,t):=\sup_y( \dot x\cdot y-H(x,y,t))$, and for any curve $\gamma:\ [0,n]\to \T^1,\ n\in \Z$, we define its action
 $$A(\gamma)=\int_0^nL(\gamma(t),\dot\gamma(t),t)\,dt.$$
 A curve is a critical point of the action with fixed endpoints if and only if it solves the Euler-Lagrange equation $\frac{d}{dt}\frac{\partial L}{\partial \dot x}(\gamma(t),\dot\gamma(t),t)=\frac{\partial L}{\partial x}(\gamma(t),\dot\gamma(t),t)$. If the curve $\gamma:\ \R\to\T^1$ solves the Euler-Lagrange equation, then the Legendre transform ($\{(\gamma(t),y(\gamma(t)))\}\subset T^*\T^1$) of ($\{\gamma(t),\dot\gamma(t))\}\subset T\T^1$) solves the Hamiltonian equation, then the time-1 map $\phi^1$ maps $(\gamma(n),y(\gamma(n)))$ to $(\gamma(n+1),y(\gamma(n+1)))$ for all $n\in \Z$.
We say that a $C^1$ curve $\gamma:\ \R\to \T^1$ is globally minimizing, if for any $r,s\in \R$ and any $\xi:\ [r',s']\to \T^1$ with $r'-r\in \Z, s'-s\in \Z$, we have
$$\int_r^sL(\gamma(t),\dot\gamma(t),t)\,dt\leq \int_{r'}^{s'}L(\xi(t),\dot\xi(t),t)\,dt.$$
Since the Lagrangian $L$ is $1$-periodic in its first entry, we may lift each curve $\gamma:\ \R\to \T^1$ to a curve $\tilde \gamma:\ \R\to \R$ by unfolding $\T^1$ to its universal covering space $\R$, so we can visualize the graph of $\tilde \gamma$ in $\R^2$.

The following theorem summarizes the main results of the twist map theory (c.f. \cite{B}).
 \begin{Thm}\label{ThmAM}
 \begin{enumerate}
 \item Each globally minimizing curve $\gamma$ has a well-defined rotation number $\rho=\lim_{n\to\pm\infty}\frac{\tilde \gamma(n)}{n}$. For each $\rho$, the set of globally minimizing curves with rotation number $\rho$ is not empty, we introduce the set \begin{equation*}
 \begin{aligned}
 \mathcal M_\rho&=\{(\gamma(n),\dot\gamma(n))\in T\T^1,\ n\in \Z\ |\\
 &\ \gamma:\ \R\to \T^1\ \mathrm{ is\ globally\ minimizing\ with \ rotation\ number}\ \rho\}.
 \end{aligned}
 \end{equation*}
 We also use the same notation to denote its Legendre transform to $T^*\T^1$.
  \item If the rotation number $\rho$ is irrational, $\mathcal{M}_\rho$ is
  \begin{enumerate}
 \item  either an invariant circle (a homotopically nontrivial circle on $T^*\T^1$),
\item  or a Denjoy minimal set (a Cantor subset of a homotopically nontrivial circle on $T^*\T^1$).
\end{enumerate}
  \item If the rotation number $\rho$ is rational, $\mathcal{M}_\rho$ always contains periodic orbits.
  \begin{enumerate}
  \item It may be an invariant circle consists of periodic orbits of the same period.
  \item If $\cM_\rho$ is not an invariant circle, then in the gap between two neighboring periodic orbits $\tilde \gamma^-,\, \tilde \gamma^+$ in $\R^2$ where $\{(\gamma^\pm(n),\dot \gamma^\pm(n))\}\subset \mathcal M_\rho$, there is a globally minimizing curve $\gamma$ with rotation number $\rho$ whose lift is asymptotic to $\tilde \gamma^-$ in the future and to $\tilde \gamma^+$ in the past, and another  asymptotic to $\tilde \gamma^+$ in the future and to $\tilde \gamma^-$ in the past.
  \end{enumerate}
\end{enumerate}

 \end{Thm}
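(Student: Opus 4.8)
The plan is to follow the standard configuration-space approach to Aubry--Mather theory (the one used in \cite{B}), since this theorem is a summary of that theory rather than a new result. First I would pass from the time-$1$ map $\phi^1$ to its \emph{generating function}: for $x<x'$ in the universal cover, set $h(x,x')$ equal to the minimal Lagrangian action $\int_0^1 L(\xi,\dot\xi,t)\,dt$ over absolutely continuous $\xi:[0,1]\to\R$ with $\xi(0)=x$, $\xi(1)=x'$. Hypotheses (2)--(3) on $H$ ensure the minimum is realized by a genuine Euler--Lagrange extremal and that $h$ is $C^2$, satisfies $h(x+1,x'+1)=h(x,x')$, is superlinear in $x'-x$, and---the decisive point---has strictly negative twist $\partial_1\partial_2 h<0$. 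A bi-infinite configuration $(x_n)_{n\in\Z}$ is \emph{stationary} when $\partial_2 h(x_{n-1},x_n)+\partial_1 h(x_n,x_{n+1})=0$ for all $n$; these are precisely the orbits of the lifted map, and $(x_n)$ is \emph{globally minimizing} if every finite segment minimizes $\sum h(x_k,x_{k+1})$ with its endpoints fixed, which matches the notion of globally minimizing curve in the statement.

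The structural heart is Aubry's crossing lemma: two distinct globally minimizing configurations, viewed as piecewise-extremal curves in $\R^2$, cross at most once; this is proved by a cut-and-paste exchange argument that uses the strict twist inequality to strictly decrease the total action whenever there are two crossings. From it one extracts two facts. First, a globally minimizing configuration is either strictly monotone (after lifting) or periodic modulo $\Z$, hence has a well-defined rotation number $\rho=\lim_{n\to\pm\infty}\tilde\gamma(n)/n$; combined with existence (below) this is item (1). Second, $\mathcal M_\rho$ is totally ordered and projects in an order-preserving, injective way onto $\T^1$, so it is contained in a Lipschitz graph---this is the Mather/Birkhoff graph property, the Lipschitz bound following from the superlinearity of $h$. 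Existence of minimizers for each $\rho$ I would get by compactness: for rational $p/q$ minimize $h$ over the finite-dimensional space of $(p,q)$-periodic configurations; for irrational $\rho$ take rationals $p_j/q_j\to\rho$, use the a priori Lipschitz estimate to extract a limit of the periodic minimizers, and check the limit is globally minimizing with rotation number $\rho$.

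For irrational $\rho$ (item (2)), the restriction of the lifted map to the ordered set $\mathcal M_\rho$ is an order-preserving circle homeomorphism of rotation number $\rho$ semiconjugate to the rigid rotation; either $\mathcal M_\rho$ projects onto all of $\T^1$---then it is a Lipschitz graph, i.e.\ a homotopically nontrivial invariant circle, giving (2.a)---or the projection omits a wandering open set, and Denjoy combinatorics force $\mathcal M_\rho$ to be a Cantor set inside a Lipschitz graph, giving (2.b). For rational $\rho=p/q$ (item (3)), the periodic minimizers already produced are periodic orbits, so $\mathcal M_\rho$ always contains periodic orbits; case (3.a) is when they fill a circle. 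If they do not, pick neighboring $(p,q)$-periodic minimizers $\tilde\gamma^-<\tilde\gamma^+$ in $\R^2$ with no periodic minimizer strictly between, and minimize the action over configurations confined to the strip between their $\Z$-translates and asymptotic to $\tilde\gamma^-$ at one end and $\tilde\gamma^+$ at the other; the crossing lemma together with a Peierls-barrier (advance/delay) estimate shows such minimizers exist and are the heteroclinic connections claimed in (3.b).

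The step I expect to be the genuine obstacle---the one to treat carefully if writing this out in full---is deriving, from the single-crossing lemma, the full order (graph) structure of $\mathcal M_\rho$ together with the uniform Lipschitz a priori bound: this one package simultaneously produces the rotation number in (1), underlies the circle-versus-Cantor dichotomy in (2), and makes the strip-confinement minimization in (3.b) legitimate. By comparison, the existence arguments (finite-dimensional minimization plus diagonal extraction), the exchange argument for the crossing lemma itself, and the Denjoy bookkeeping are comparatively routine given that structure. The complete details can be found in \cite{B}.
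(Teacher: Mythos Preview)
Your proposal is correct and follows the standard configuration-space approach of Bangert \cite{B}, which is precisely what the paper invokes: the paper does not supply its own proof of this theorem but states it as a summary of Aubry--Mather theory and refers the reader to \cite{B} for details. Your outline is therefore strictly more detailed than what appears in the paper, and is faithful to the source it cites.
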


\begin{Def}\label{DefBroken}
 When $\mathcal M_\rho$ is not a homotopically nontrivial invariant circle in $T^*\T^1$, we say that $\mathcal M_\rho$ is \emph{broken}. This includes case (2.b) and (3.b) in the above Theorem \ref{ThmAM}.
 \end{Def}
Let $R$ be the region (called Birkhoff region of instability) bounded by $\phi^1$-invariant homotopically non-trivial Jordan curves $\Gamma_-$ and $\Gamma_+$ with $\rho(\Gamma_-) < \rho(\Gamma_+)$, where $\rho(\Gamma_\pm)$ is the rotation number of $\phi^1|_{\Gamma_\pm}$.  Note that all $\mathcal M_\rho\subset R$ with $\rho(\Gamma_-)<\rho<\rho(\Gamma_+)$ is broken. We cite the following theorem of Mather.
 \begin{Thm}[\cite{Ma2}]\label{ThmMather}
\begin{enumerate}
\item Suppose $\rho(\Gamma_-) < \alpha, \omega < \rho(\Gamma_+)$. Then there is an orbit of $\phi^1$ whose $\alpha$-limit set lies in $\mathcal M_\alpha$ and whose $\omega$-limit set lies in $\cM_\omega$. Furthermore, if $\rho( \Gamma_- )$ $($resp. $\rho(\Gamma_+))$ is irrational, then this conclusion still holds with the weaker hypothesis $\rho(\Gamma_-) \leq \al$, (resp. $\al,\ \omega\leq \rho(\Gamma_+))$.
\item
Consider for each $i\in \Z$ a real number $\rho(\Gamma_-)\leq\omega_i\leq \rho(\Gamma_+)$ and a positive number $\eps_i$. Then there exists an orbit $(\ldots, P_j,\ldots)$ and an increasing bi-infinite sequence of integers $j(i)$ such that dist. $(P_{j(i)},),\cM_{\omega(i)})<\eps_i$.
\end{enumerate}
\end{Thm}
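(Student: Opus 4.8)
The natural route is Mather's variational construction \cite{Ma2}. I would first pass to the discrete picture: lift $\phi^1$ to the universal cover, represent it by a generating function $h(x,x')$ (equivalently, the time-one action of the Lagrangian $L$), and identify orbits lying in the annular region $R$ with bi-infinite configurations $(x_n)_{n\in\Z}$ that are stationary for the formal action $\sum_n h(x_n,x_{n+1})$. The preliminary toolkit is standard Aubry--Mather theory: the twist condition forces action-minimizing configurations to be Lipschitz (the increments $x_{n+1}-x_n$ are bounded because the orbit stays in the compact set $R$), and any two minimizing configurations cross at most once. These give the compactness I will use throughout: any sequence of minimizing configuration segments has a subsequence converging uniformly on compacta to a minimizing configuration, and the asymptotic behavior of such configurations relative to a fixed $\mathcal M_\rho$ is preserved in the limit because of the ordering.

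For part (1), fix rotation numbers $\rho(\Gamma_-)\le\al,\omega\le\rho(\Gamma_+)$ and choose minimizing configurations $\xi^-\in\mathcal M_\al$ and $\xi^+\in\mathcal M_\omega$. On the class of configurations $(x_n)$ backward-asymptotic to a translate of $\xi^-$ and forward-asymptotic to a translate of $\xi^+$, define the renormalized action $\tilde A(x)=\sum_{n<0}(h(x_n,x_{n+1})-h(\xi^-_n,\xi^-_{n+1}))+\sum_{n\ge 0}(h(x_n,x_{n+1})-h(\xi^+_n,\xi^+_{n+1}))$, normalized so both tails converge. The crux is to show this Peierls-type barrier has finite infimum; this is exactly where the hypothesis that $R$ is a Birkhoff region of instability enters, since the absence of an invariant circle with rotation number strictly between $\rho(\Gamma_-)$ and $\rho(\Gamma_+)$ forces every intermediate $\mathcal M_\rho$ to be broken, and one then builds an explicit comparison configuration sliding through the gaps of these sets from $\xi^-$ to $\xi^+$ at finite cost. (If an invariant circle separated $\al$ from $\omega$ the barrier would be $+\infty$ and no connecting orbit could exist.) Finiteness of the barrier together with the compactness above produces a minimizer of $\tilde A$, whose orbit has $\al$-limit set in $\mathcal M_\al$ and $\omega$-limit set in $\mathcal M_\omega$. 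The sharpening when $\rho(\Gamma_\pm)$ is irrational follows because an irrational $\mathcal M_{\rho(\Gamma_\pm)}$ (invariant circle or Denjoy set) still admits interior minimizers asymptotic to it, whereas a rational boundary set could be an invariant circle of periodic orbits that blocks the connection.

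For part (2), I would iterate. Given the data $(\omega_i,\eps_i)_{i\in\Z}$, build for each $N$ a finite configuration that successively enters the $\eps_i$-neighborhood of $\mathcal M_{\omega_i}$ for $|i|\le N$ in the prescribed order, by minimizing the action subject to passing through these neighborhoods; feasibility and a uniform renormalized bound come from applying the barrier estimate of part (1) to each consecutive pair $(\omega_i,\omega_{i+1})$, together with a gluing estimate that lets the configuration dwell arbitrarily long near each $\mathcal M_{\omega_i}$ so the transition costs accumulate in a controlled way. Letting $N\to\infty$ and extracting a convergent subsequence via the compactness above yields a bi-infinite orbit $(\ldots,P_j,\ldots)$, and the times at which it is $\eps_i$-close to $\mathcal M_{\omega_i}$ provide the increasing sequence $j(i)$.

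The main obstacle is the finiteness and quantitative control of the Peierls barrier inside the Birkhoff region of instability: bounding the transition cost between two Aubry--Mather sets and ruling out escape of minimizing sequences to $n=\pm\infty$ in configuration space. This is Mather's genuinely hard input, resting on a careful analysis of how minimal configurations of different rotation numbers are ordered and on exploiting the gap structure of the broken $\mathcal M_\rho$; once this quantitative barrier estimate is in hand, the concatenation in part (2) is comparatively routine, the only real care being that the gluing errors do not accumulate.
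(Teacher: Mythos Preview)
The paper does not give its own proof of this theorem: it is stated in the appendix as a cited result of Mather \cite{Ma2}, with no argument supplied. Your outline is a faithful sketch of Mather's original variational construction in \cite{Ma2}---generating function, renormalized action/Peierls barrier, finiteness of the barrier inside a Birkhoff region of instability, and concatenation via compactness---so there is nothing in the paper to compare against beyond the citation itself.
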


\section*{Acknowledgment}
The author would like to express his deep gratitudes to Professor S.-T. Yau for suggesting the topic. He also would like to thank Mr. Yifan Guo for drawing all the figures. The author is supported by NSFC (Significant project No.11790273) in China and Beijing Natural Science Foundation (Z180003).


\begin{thebibliography}{DGO}

\def\bi#1{\bibitem[#1]{#1}}
\bi{A89} Arnol'd, V.I. {\it Mathematical methods of classical mechanics}. Vol. 60. Springer, 1989.
\bi{AKN}Arnold, Vladimir I., Valery V. Kozlov, and Anatoly I. Neishtadt. {\it Mathematical aspects of classical and celestial mechanics.} Vol. 3. Springer Science \& Business Media, 2007.

\bi{B} Bangert, Victor. {\it Mather sets for twist maps and geodesics on tori.} Dynamics reported. Vieweg+ Teubner Verlag, 1988. 1-56.
\bi{BGH} Brink, Jeandrew, Marisa Geyer, and Tanja Hinderer. {\it The Astrophysics of Resonant Orbits in the Kerr Metric.} arXiv preprint arXiv:1501.07728 (2015).
\bi{C} Chandrasekhar, Subrahmanyan. {\it The mathematical theory of black holes.} Research supported by NSF. Oxford/New York, Clarendon Press/Oxford University Press (International Series of Monographs on Physics. Volume 69), 1983, 663 p. 1 (1983).
\bi{Ca} Dias Carneiro, {\it On minimizing measures of the action of autonomous Lagrangians}, Nonlinearily, 8 (1995)1077-1085.
\bi{DKN} Efthimia Deligianni, Jutta Kunz, Petya Nedkova, {\it Quasi-periodic oscillations from the accretion disk around distorted black holes}, Arxiv: 2003.01252.
\bi{F} Fenichel N., {\it Persistence and smoothness of invariant manifolds for flows}, Indiana Univ. Math. J {\bf 21} (1971) 193-226.
\bi{GAC}Lukes-Gerakopoulos, Georgios, Theocharis A. Apostolatos, and George Contopoulos. {\it Observable signature of a background deviating from the Kerr metric.} Physical Review D 81.12 (2010): 124005.
\bi{GHW} Gralla, Samuel E., Daniel E. Holz, and Robert M. Wald. {\it Black hole shadows, photon shells, and lensing rings.} Physical Review D 100.2 (2019): 024018.
\bi{GL} Gralla, Samuel E., and Alexandru Lupsasca. {\it Null geodesics of the Kerr exterior.} Physical Review D 101.4 (2020): 044032.
\bi{GLP} Grossman, Rebecca, Janna Levin, and Gabe Perez-Giz. {\it Harmonic structure of generic Kerr orbits.} Physical Review D 85.2 (2012): 023012
\bi{HF} Hinderer, Tanja, and Eanna E. Flanagan. {\it Two-timescale analysis of extreme mass ratio inspirals in Kerr spacetime: Orbital motion.} Physical Review D 78.6 (2008): 064028.
\bi{HPS} Hirsch M. Pugh C. \& Shub M., {\it Invariant Manifolds}, Lecture Notes Math. {\bf 583} (1977) Springer-Verlag.
\bi{IM} Adam R. Ingram, Sara E. Motta, {\it A review of quasi-periodic oscillations from black hole X-ray binaries: observation and theory}, New Astronomy Reviews 85 (2019) 101524
\bi{J} Michael D. Johnson, Alexandru Lupsasca, Andrew Strominger, et al, {\it Universal interferometric signatures of a black hole's photon shell.} arXiv:1907.04329v2 [astro-ph.IM] 27 Mar 2020
\bi{JDW} Chichuan Jin, Chris Done, and Martin Ward, {\it Reobserving the NLS1 galaxy RE J1034+396 - I. The long-term, recurrent X-ray QPO with a high significance}, MNRAS 495, 3538-3550 (2020).
\bi{M1} Moser, J,  {\it Monotone twist mappings and calculus of variations}, Erg. Theory Dyn. Sys, 1986, 6, 401-413.
\bi{M2} Moser, J. {\it On invariant curves of area-preserving mappings of an annulus.} Nachr. Akad. Wiss. G\"ottingen, II (1962): 1-20.
\bi{M87} Wielgus, M, {\it et al},  {\it Monitoring the Morphology of M87* in 2009-2017 with the Event Horizon Telescope},  Astrophysics Journal, vol 901, no 67, 2020.
\bi{Ma1} Mather J., {\it Action minimizing invariant measures for positive definite Lagrangian systems,} Math. Z., {\bf 207(2)}(1991) 169-207.
\bi{Ma2}Mather, John N. {\it Variational construction of orbits of twist diffeomorphisms.} Journal of the American Mathematical Society (1991): 207-263.
\bi{N} New York Times, {\it Infinite Visions Were Hiding in the First Black Hole Image's Rings}, https://www.nytimes.com/2020/03/28/science/black-hole-rings.html.
\bi{P}  P\"{o}schel, J\"urgen. {\it A lecture on the classical KAM theorem.} arXiv preprint arXiv:0908.2234 (2009).
\bi{S}Schmidt, Wolfram. {\it Celestial mechanics in Kerr spacetime.} Classical and Quantum Gravity 19.10 (2002): 2743.
\bi{T} Teo, Edward. {\it Spherical photon orbits around a Kerr black hole.} General Relativity and Gravitation 35.11 (2003): 1909-1926.
\bi{WBS} Warburton, Niels, Leor Barack, and Norichika Sago. {\it Isofrequency pairing of geodesic orbits in Kerr geometry.} Physical Review D 87.8 (2013): 084012.
\bi{Wi} D. Wilkins, {\it Bounded geodesics in the Kerr metric}, Physical Review D, Vol 5, No. 4, 814-822, 1972,
\bi{X} J. Xue, {\it Arnold diffusion and geodesic dynamics of blackholes}, preprint.
\end{thebibliography}
\end{document}